\title{Is protein folding problem really a NP-complete one ? First investigations}
\author[ufc]{Jacques M. Bahi}
\ead{jacques.bahi@univ-fcomte.fr}
\author[imag]{Wojciech Bienia}
\ead{wojciech.bienia@g-scop.inpg.fr}
\author[ufc2]{Nathalie C\^{o}t\'{e}}
\ead{nath349@gmail.com}
\author[ufc]{Christophe Guyeux\corref{cor1}\fnref{fn1}}
\ead{christophe.guyeux@univ-fcomte.fr}
\address[ufc]{FEMTO-ST Institute, UMR 6174 CNRS, University of Franche-Comt\'{e}, Besan\c con, France}
\address[ufc2]{Laboratoire de Biologie du D\'{e}veloppement, UMR 7622, Universit\'{e} Pierre et Marie Curie, Paris, France} 
\address[imag]{G-SCOP Laboratory, ENSIMAG, 46 av. F\'{e}lix Viallet, F-38031 Grenoble Cedex 1, France}
\begin{document}

\begin{abstract}
To determine the 3D conformation of proteins is
a necessity to understand their functions or 
interactions with other molecules. It is commonly
admitted that, when proteins fold from their 
primary linear structures to their final 3D 
conformations, they tend to choose the ones that
minimize their free energy. To find the 3D conformation
of a protein knowing its amino acid sequence,
bioinformaticians use various models of different
resolutions and artificial intelligence tools, as
the protein folding prediction problem is a NP
complete one. More precisely, to determine the
backbone structure of the protein using the 
low resolution models (2D HP square and 3D HP
cubic), by 
finding the conformation that minimize free 
energy, is intractable exactly~\cite{Berger1998}.
 Both the proof
of NP-completeness and the 2D prediction consider
that acceptable conformations have to satisfy a
self-avoiding walk (SAW) requirement, as two different
amino acids cannot occupy a same position in the
lattice. It is shown in this document that the
SAW requirement considered when proving 
NP-completeness is different from the SAW 
requirement used in various prediction programs, and 
that they
are different from the real biological 
requirement. Indeed, the proof of NP completeness 
and the predictions \emph{in silico} consider 
conformations that are not possible in practice.
Consequences of this fact are investigated in this
research work.
\end{abstract}

\begin{keyword}
Protein folding problem \sep Self-Avoiding Walk requirement \sep NP-completeness \sep Graph theory \sep Pivot moves
\end{keyword}

\maketitle

\section{Introduction}

Proteins are polymers formed by different kinds of amino acids.
During or after proteins have been synthesized by ribosomes, they fold to form a specific tridimensional shape.
This 
3D geometric pattern defines their biological functionality, properties, and so on.
For instance, the hemoglobin is able to carry oxygen to the blood
stream thanks to its 3D conformation. 
However, contrary to the
mapping from DNA to the amino acids sequence, the complex folding of
this sequence is not yet understood. 
In fact, Anfinsen's
``Thermodynamic Hypothesis'' claims that the chosen 3D conformation
corresponds to the lowest free energy minimum of the considered
protein~\cite{Anfinsen20071973}.
Efficient constraint programming methods can solve the
problem for reasonably sized sequences~\cite{Dotu}.
But the conformation that minimizes this free energy 
 is most of the time impossible to find in practice,
at least for large proteins, 
due to the number of possible conformations.
Indeed the Protein Structure Prediction (PSP) problem is a
NP-complete one \cite{Crescenzi98,Berger1998}. 
This is why conformations of
proteins are \emph{predicted}: the 3D structures that minimize
the free energy of the protein under consideration are
found by using computational intelligence tools like genetic
algorithms \cite{DBLP:conf/cec/HiggsSHS10}, ant colonies
\cite{Shmygelska2005Feb}, particle swarm
\cite{DBLP:conf/cec/Perez-HernandezRG10}, memetic algorithms
\cite{Islam:2009:NMA:1695134.1695181}, constraint programming~\cite{Dotu,cpsp},
or neural networks
\cite{Dubchak1995}, etc. 
These computational intelligence tools
are coupled with protein energy models (like AMBER,
DISCOVER, or ECEPP/3) to find a conformation that approximately
minimize the free energy of a given protein. 
Furthermore, to face the complexity of the PSP
problem, authors who try to predict the protein folding process use
models of various resolutions. 
For instance, in coarse grain, single-bead models, an amino acid is considered as
a single bead, or point.
These low resolution models are often used as the first stage of the 3D
structure prediction: the backbone of the 3D conformation is
determined. Then, high resolution models come next for further
exploration. Such a prediction strategy is commonly used in PSP
softwares like ROSETTA \cite{Bonneau01,Chivian2005} or TASSER
\cite{Zhang2005}.

In this paper, which is a supplement of \cite{bgc11:ip,bgcs11:ij}, 
we investigate the 2D HP square lattice model. Let us recall that this popular model
is used to test methods and as a first
2D HP lattice folding stage in some protein folding 
prediction algorithms
\cite{DBLP:conf/cec/HiggsSHS10,Braxenthaler97,DBLP:conf/cec/IslamC10,Unger93,DBLP:conf/cec/HorvathC10}.
It focuses only on hydrophobicity by separating
the amino acids into two sets: hydrophobic (H) and hydrophilic (or
polar P) \cite{Dill1985}.
These amino acids occupy vertices of a square lattice, and the 2D low
resolution conformation of the given protein
is thus represented by a self avoiding walk (SAW) on this lattice.
Variations of this model are frequently investigated: 2D or 3D lattices, with square, cubic, triangular, or face-centered-cube shapes.
However, at each time, a SAW requirement for the 
targeted conformation is required.
The PSP problem takes place in that context: given a sequence
of hydrophobic and hydrophilic amino acids, to find the 
self avoiding walk on the lattice that maximizes the number of 
hydrophobic neighbors is a NP complete problem~\cite{Crescenzi98}. 

We will show in this document that this SAW requirement can be understood in various 
different ways, even in the 2D square lattice model. The first understanding of this requirement in the 2D model,
called $SAW_1$ in the remainder of this paper,
has been chosen by authors of~\cite{Crescenzi98} when they have
established the proof of NP-completeness for the PSP problem.
It corresponds to the famous ``excluded volume'' requirement,
and it has been already well-studied by the discrete mathematics
community (see, for instance, the book of Madras and 
Slade~\cite{Madras}). 
It possesses a dynamical formulation we call it $SAW_2$ in
this document.
The $SAW_3$ set is frequently 
chosen by bioinformaticians when they try
to predict the backbone conformation of proteins 
using a low resolution model.
Finally, the last one proposed here 
is perhaps the most realistic one, even
if it still remains far from the biological 
folding operation. We will demonstrate that
these four sets are not equal. In particular,
we will establish that $SAW_4$ is strictly 
included into $SAW_3$, which is strictly included
into $SAW_1=SAW_2$. So the NP-completeness
proof has been realized in a strictly larger set than the
one used for prediction, which is strictly larger than the
set of biologically possible conformations.
Concrete examples of 2D conformations that are
in a $SAW_i$ without being in another $SAW_j$ 
will be given, and characterizations of these
sets, in terms 
of graphs, will finally be proposed.

The remainder of this paper is structured as follows. In the next
section we recall some notations and terminologies on the 2D HP square lattice model, chosen here to simplify explanations.
In Section~\ref{sec:dynamical system}, the 
dynamical system used to describe the folding
process in the 2D model, initially presented
in~\cite{bgc11:ip,bgcs11:ij}, is recalled.
In Sect.~\ref{sec:saw}, various ways to understand the so-called
self-avoiding walk (SAW) requirement are detailed.
Their relations and inclusions are investigated in the next section.
Section~\ref{sec:graph} presents a graph approach to determine the
size ratios between the four SAW requirements defined previously,
and the consequences of their strict inclusions are discussed. 
This paper ends by a conclusion section, in
which our contribution is summarized and intended future work is
presented.

\section{Background}

In the sequel $S^{n}$ denotes the $n^{th}$ term of a sequence $S$ and
$V_{i}$ the $i^{th}$ component of a vector $V$.
 The $k^{th}$
composition of a single function $f$ is represented by $f^{k}=f
\circ...\circ f$.
The set of congruence classes modulo 4 is denoted as $\mathds{Z}/4\mathds{Z}$.
 Finally, given two integers $a<b$, the following notation is used:
$\llbracket a;b\rrbracket =\{a,a+1,\hdots,b\}$.

\subsection{2D Hydrophilic-Hydrophobic (HP) Model}

In the HP model, hydrophobic interactions are supposed to dominate
protein folding.
 This model was formerly introduced by Dill, who
consider in \cite{Dill1985} that the protein core freeing up energy is
formed by hydrophobic amino acids, whereas hydrophilic amino acids
tend to move in the outer surface due to their affinity with the
solvent (see Fig.~\ref{fig:hpmodel}).
 
In this model, a protein conformation is a ``self-avoi\-ding walk (SAW)'', as the walks studied in~\cite{Madras}, on a 2D or 3D lattice such that its energy $E$, depending on topological
neighboring contacts between hydrophobic amino acids that are not
contiguous in the primary structure, is minimal.
In other words, for an amino-acid sequence $P$ of length $\mathsf{N}$ and for the set
$\mathcal{C}(P)$ of all SAW conformations of $P$, the chosen
conformation will be $C^* = min \left\{E(C) \mid C \in \mathcal{C}(P)\right\}$ \cite{Shmygelska05}.
In that context and for a conformation $C$, $E(C)=-q$ where $q$ is equal to the number of
topological hydrophobic neighbors.
 For example, $E(c)=-5$ in
Fig.~\ref{fig:hpmodel}.

\begin{figure}[t]
\centering
\includegraphics[width=2.375in]{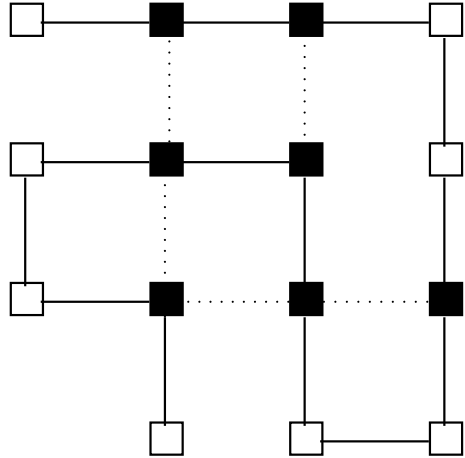}
\caption{Hydrophilic-hydrophobic model (black squares are
hydrophobic residues)}
\label{fig:hpmodel}
\end{figure}

\subsection*{Protein Encoding}

Additionally to the direct coordinate presentation in
the lattice, at least two other
isomorphic encoding strategies for HP models are possible: relative
encoding and absolute encoding. 
In relative encoding \cite{Hoque09},
the move direction is defined relative to the direction of the
previous move (forward, backward, left, or right).
 Alternatively, in absolute encoding
\cite{Backofen99algorithmicapproach}, which is the encoding chosen in
this paper, the direct coordinate presentation is replaced by letters
or numbers representing directions with respect to the lattice
structure.

For absolute encoding in the 2D square lattice, the permitted moves are: east
$\rightarrow$ (denoted by 0), south $\downarrow$ (1), west $\leftarrow$ (2), and north $\uparrow$ (3).
A 2D conformation $C$ of $\mathsf{N}+1$ residues for a protein $P$ is then an element $C$ of $\mathds{Z}/4\mathds{Z}^{\mathsf{N}}$, with a first component equal to 0 (east) \cite{Hoque09}.
For instance, in Fig.~\ref{fig:hpmodel}, the 2D absolute encoding is 00011123322101 (starting from the upper left corner), whereas 001232 corresponds to the following path
in the square lattice: (0,0), (1,0), (2,0), (2,-1), (1,-1), (1,0), (0,0).
In that situation, at most $4^{\mathsf{N}}$ conformations are possible when considering $\mathsf{N}+1$ residues, even if some of them invalidate the SAW requirement as 
defined in~\cite{Madras}.

\section{A Dynamical System for the 2D HP Square Lattice Model}
\label{sec:dynamical system}

Protein minimum energy structure can be considered 
statistically or dynamically. In the latter case, one
speaks in this article of ``protein folding''.
We recall here how to model the folding process in the 2D 
model, or pivot moves, as a dynamical
system. Readers are referred to \cite{bgc11:ip,bgcs11:ij} for further explanations
and to investigate the dynamical behavior of the proteins pivot moves in
this 2D model (it is indeed proven to be chaotic, as
defined by Devaney~\cite{devaney}).

\subsection{Initial Premises}

Let us start with preliminaries introducing some concepts that will be
useful in our approach.

The primary structure of a given protein $P$ with $\mathsf{N}+1$ residues is coded by $0 0 \hdots 0$ ($\mathsf{N}$ times) in absolute encoding.
Its final 2D conformation has an absolute encoding equal to $0 C_1^* \hdots C_{\mathsf{N}-1}^*$, where $\forall i, C_i^* \in \mathds{Z}/4\mathds{Z}$, is such that $E(C^*) = min \left\{E(C) \big/ C \in \mathcal{C}(P)\right\}$.
This final conformation depends on the repartition of hydrophilic and hydrophobic amino acids in the initial sequence.

Moreover, we suppose that, if the residue number $n+1$ is at the east of the residue number $n$ in absolute encoding ($\rightarrow$) and if a fold (pivot move) occurs after $n$, then the east move can only by changed into north ($\uparrow$) or south ($\downarrow$).
That means, in our simplistic model, only rotations
or pivot moves of $+\frac{\pi}{2}$ or $-\frac{\pi}{2}$ are possible.

Consequently, for a given residue that has to be updated, only one of the two possibilities below can appear for its 
absolute encoding during a pivot move:
\begin{itemize}
\item $0 \longmapsto 1$ (that is, east becomes north), $1 \longmapsto 2, 2 \longmapsto 3,$ or $ 3 \longmapsto 0$ 
for a pivot move in the clockwise direction, or
\item $1 \longmapsto 0, 2 \longmapsto 1, 3 \longmapsto 2,$ or $0 \longmapsto 3$ 
for an anticlockwise. 
\end{itemize}

This fact leads to the following definition:
\begin{definition}
The \emph{clockwise fold function} is the function $f: \mathds{Z}/4\mathds{Z} \longrightarrow \mathds{Z}/4\mathds{Z}$ defined by $f(x) = x+1 ~(\textrm{mod}~ 4)$.
\end{definition}
Obviously the anticlockwise fold function is $f^{-1}(x) = x-1 ~(\textrm{mod}~ 4)$.

Thus at the $n^{th}$ folding time, a residue $k$ is chosen and its absolute move is changed by using either $f$ or $f^{-1}$.
As a consequence, \emph{all of the absolute moves must be updated from the coordinate $k$ until the last one $\mathsf{N}$ by using the same folding function}.

\begin{example}
\label{ex1}
If the current conformation is $C=000111$, i.e.,

\begin{center}
\scalebox{0.75} 
{
\begin{pspicture}(0,-2.6)(5.2,2.6)
\psframe[linewidth=0.04,dimen=outer](0.4,2.6)(0.0,2.2)
\psframe[linewidth=0.04,dimen=outer](2.0,2.6)(1.6,2.2)
\psframe[linewidth=0.04,dimen=outer](3.6,2.6)(3.2,2.2)
\psframe[linewidth=0.04,dimen=outer](5.2,2.6)(4.8,2.2)
\psframe[linewidth=0.04,dimen=outer](5.2,1.0)(4.8,0.6)
\psframe[linewidth=0.04,dimen=outer](5.2,-0.6)(4.8,-1.0)
\psframe[linewidth=0.04,dimen=outer](5.2,-2.2)(4.8,-2.6)
\psline[linewidth=0.04cm,arrowsize=0.05291667cm 2.0,arrowlength=1.4,arrowinset=0.4]{->}(0.4,2.4)(1.6,2.4)
\psline[linewidth=0.04cm,arrowsize=0.05291667cm 2.0,arrowlength=1.4,arrowinset=0.4]{->}(2.0,2.4)(3.2,2.4)
\psline[linewidth=0.04cm,arrowsize=0.05291667cm 2.0,arrowlength=1.4,arrowinset=0.4]{->}(3.6,2.4)(4.8,2.4)
\psline[linewidth=0.04cm,arrowsize=0.05291667cm 2.0,arrowlength=1.4,arrowinset=0.4]{->}(5.0,2.2)(5.0,1.0)
\psline[linewidth=0.04cm,arrowsize=0.05291667cm 2.0,arrowlength=1.4,arrowinset=0.4]{->}(5.0,0.6)(5.0,-0.6)
\psline[linewidth=0.04cm,arrowsize=0.05291667cm 2.0,arrowlength=1.4,arrowinset=0.4]{->}(5.0,-1.0)(5.0,-2.2)
\end{pspicture} 
}
\end{center}
 and
if the third residue is chosen to fold (pivot move) by a rotation of $-\frac{\pi}{2}$ (mapping $f$), the new conformation will be
$(C_1,C_2,f(C_3),f(C_4),f(C_5),f(C_6))$, which
is $(0,0,1,2,2,2).$ That is,
\begin{center}
\scalebox{0.75} 
{
\begin{pspicture}(0,-1.0)(5.2,1.0)
\psframe[linewidth=0.04,dimen=outer](2.0,1.0)(1.6,0.6)
\psframe[linewidth=0.04,dimen=outer](3.6,1.0)(3.2,0.6)
\psframe[linewidth=0.04,dimen=outer](5.2,1.0)(4.8,0.6)
\psframe[linewidth=0.04,dimen=outer](2.0,-0.6)(1.6,-1.0)
\psframe[linewidth=0.04,dimen=outer](0.4,-0.6)(0.0,-1.0)
\psframe[linewidth=0.04,dimen=outer](5.2,-0.6)(4.8,-1.0)
\psframe[linewidth=0.04,dimen=outer](3.6,-0.6)(3.2,-1.0)
\psline[linewidth=0.04cm,arrowsize=0.05291667cm 2.0,arrowlength=1.4,arrowinset=0.4]{->}(2.0,0.8)(3.2,0.8)
\psline[linewidth=0.04cm,arrowsize=0.05291667cm 2.0,arrowlength=1.4,arrowinset=0.4]{->}(3.6,0.8)(4.8,0.8)
\psline[linewidth=0.04cm,arrowsize=0.05291667cm 2.0,arrowlength=1.4,arrowinset=0.4]{->}(3.2,-0.8)(2.0,-0.8)
\psline[linewidth=0.04cm,arrowsize=0.05291667cm 2.0,arrowlength=1.4,arrowinset=0.4]{->}(5.0,0.6)(5.0,-0.6)
\psline[linewidth=0.04cm,arrowsize=0.05291667cm 2.0,arrowlength=1.4,arrowinset=0.4]{->}(4.8,-0.8)(3.6,-0.8)
\psline[linewidth=0.04cm,arrowsize=0.05291667cm 2.0,arrowlength=1.4,arrowinset=0.4]{->}(1.6,-0.8)(0.4,-0.8)
\end{pspicture} 
}
\end{center}
\end{example}

These considerations lead to the formalization described thereafter.

\subsection{Formalization and Notations}

Let $\mathsf{N}+1$ be a fixed number of amino acids, where $\mathsf{N}\in\mathds{N}^*= \left\{1,2,3,\hdots\right\}$.
We define 
$$\check{\mathcal{X}}=\mathds{Z}/4\mathds{Z}^\mathsf{N}\times \llbracket -\mathsf{N};\mathsf{N} \rrbracket^\mathds{N}$$ 
as the phase space of all possible folding processes.
An element $X=(C,F)$ of this dynamical folding space is constituted by:
\begin{itemize}
\item A conformation of the $\mathsf{N}+1$ residues in absolute encoding: $C=(C_1,\hdots, C_\mathsf{N}) \in \mathds{Z}/4\mathds{Z}^\mathsf{N}$. Note that we do not require self-avoiding walks here.
\item A sequence $F \in \llbracket -\mathsf{N} ; \mathsf{N} \rrbracket^\mathds{N}$ of future pivot moves such that, when $F_i \in \llbracket -\mathsf{N}; \mathsf{N} \rrbracket$ is $k$, it means that it occurs:
\begin{itemize}
\item a pivot move after the $k-$th residue by a rotation of $-\frac{\pi}{2}$ (mapping $f$) at the $i-$th step, if $k = F_i >0$, 
\item no fold at time $i$ if $k=0$,
\item a pivot move after the $|k|-$th residue by a rotation of $\frac{\pi}{2}$ (\emph{i.e.}, $f^{-1}$) at the $i-$th time, if $k<0$.
\end{itemize}
\end{itemize}
On this phase space, the protein folding dynamic in the 2D model can be formalized as follows.

\medskip

Denote by $i$ the map that transforms a folding sequence in its first term (\emph{i.e.}, in the first folding operation):
$$
\begin{array}{lccl}
i:& \llbracket -\mathsf{N};\mathsf{N} \rrbracket^\mathds{N} & \longrightarrow & \llbracket -\mathsf{N};\mathsf{N} \rrbracket \\
& F & \longmapsto & F^0,
\end{array}$$
by $\sigma$ the shift function over $\llbracket -\mathsf{N};\mathsf{N} \rrbracket^\mathds{N}$, that is to say,
$$
\begin{array}{lccl}
\sigma :& \llbracket -\mathsf{N};\mathsf{N} \rrbracket^\mathds{N} 
 & \longrightarrow & \llbracket -\mathsf{N};\mathsf{N} \rrbracket^\mathds{N} \\
& \left(F^k\right)_{k \in \mathds{N}} & \longmapsto 
 & \left(F^{k+1}\right)_{k \in \mathds{N}},
\end{array}$$
\noindent and by $sign$ the function:
$$
sign(x) = \left\{
\begin{array}{ll}
1 & \textrm{if } x>0,\\
0 & \textrm{if } x=0,\\
-1 & \textrm{else.}
\end{array}
\right.
$$ 
Remark that the shift function removes the first folding operation (a pivot move) from the folding sequence $F$ once it has been achieved.

Consider now the map $G:\check{\mathcal{X}} \to \check{\mathcal{X}}$ defined by:
$$G\left((C,F)\right) = \left( f_{i(F)}(C),\sigma(F)\right),$$
\noindent where $\forall k \in \llbracket -\mathsf{N};\mathsf{N} \rrbracket$, 
$f_k: \mathds{Z}/4\mathds{Z}^\mathsf{N} \to \mathds{Z}/4\mathds{Z}^\mathsf{N}$
is defined by: 
$f_k(C_1,\hdots,C_\mathsf{N}) = (C_1,\hdots,C_{|k|-1},f^{sign(k)}(C_{|k|}),\hdots,f^{sign(k)}(C_\mathsf{N})).$
Thus the folding process of a protein $P$ in the 2D HP square lattice model, with initial conformation equal to $(0,0, \hdots, 0)$ in absolute encoding and a folding sequence equal to $(F^i)_{i \in \mathds{N}}$, is defined by the following dynamical system over $\check{\mathcal{X}}$:
$$
\left\{
\begin{array}{l}
X^0=((0,0,\hdots,0),F)\\
X^{n+1}=G(X^n), \forall n \in \mathds{N}.
\end{array}
\right.
$$

In other words, at each step $n$, if $X^n=(C,F)$, we take the first folding operation to realize, that is $i(F) = F^0 \in \llbracket
-\mathsf{N};\mathsf{N} \rrbracket$, we update the current conformation
$C$ by rotating all of the residues coming after the $|i(F)|-$th one,
which means that we replace the conformation $C$ with $f_{i(F)}(C)$.
Lastly, we remove this rotation (the first term $F^0$) from the folding sequence $F$: $F$ becomes $\sigma(F)$.

\begin{example}
Let us reconsider Example \ref{ex1}.
The unique iteration of this folding process transforms a point of $\check{X}$ having the form $\left((0,0,0,1,1,1);(3, F^1, F^2, \hdots)\right)$ in $G\left((0,0,0,1,1,1),(+3,F^1,F^2, \hdots)\right),$ which is equal to $\left((0,0,1,2,2,2),(F^1,F^2, \hdots)\right)$.
\end{example}

\begin{remark}
Such a formalization allows the study of proteins that never stop to fold, for instance due to never-ending interactions with the environment.
\end{remark}

\begin{remark}
A protein $P$ that has finished to fold, if such a protein exists, has the form $(C,(0,0,0,\hdots))$, where $C$ is the final 2D structure of $P$.
In this case, we can assimilate a folding sequence that is convergent to 0, \emph{i.e.}, of the form $(F^0, \hdots, F^n,0 \hdots)$, with the finite sequence $(F^0, \hdots, F^n)$.
\end{remark}

We will now introduce the SAW requirement in our formulation of the folding process in the 2D model.

\section{The SAW Requirement}
\label{sec:saw}
\subsection{The paths without crossing}
\label{pathWithout}
Let $\mathcal{P}$ denotes the 2D plane,
$$
\begin{array}{cccc}
p: & \mathds{Z}/4\mathds{Z}^\mathsf{N} & \to & \mathcal{P}^{\mathsf{N}+1} \\
 & (C_1, \hdots, C_\mathsf{N}) & \mapsto & (X_0, \hdots, X_\mathsf{N})
\end{array}
$$
where $X_0 = (0,0)$, and
$$
X_{i+1} = \left\{
\begin{array}{ll}
X_i + (1,0) & ~\textrm{if } c_i = 0,\\
X_i + (0,-1) & ~\textrm{if } c_i = 1,\\
X_i + (-1,0) & ~\textrm{if } c_i = 2,\\
X_i + (0,1) & ~\textrm{if } c_i = 3.
\end{array}
\right.
$$

The map $p$ transforms an absolute encoding in its 2D representation.
For instance, $p((0,0,0,1,1,1))$ is ((0,0);(1,0);(2,0);(3,0);(3,-1);(3,-2);(3,-3)), that is, the first figure of Example~\ref{ex1}.

Now, for each $(P_0, \hdots, P_\mathsf{N})$ of $\mathcal{P}^{\mathsf{N}+1}$, we denote by $$support((P_0, \hdots, P_\mathsf{N}))$$ the set (without repetition): $\left\{P_0, \hdots, P_\mathsf{N}\right\}$. For instance,
$$support\left((0,0);(0,1);(0,0);(0,1)\right) = \left\{(0,0);(0,1)\right\}.$$

Then,

\begin{definition}
\label{def:SAW}
A conformation $(C_1, \hdots, C_\mathsf{N}) \in \mathds{Z}/4\mathds{Z}^{\mathsf{N}}$ is a \emph{path without crossing} iff the cardinality of $support(p((C_1, \hdots, C_\mathsf{N})))$ is $\mathsf{N}+1$.
\end{definition}

This path without crossing is sometimes referred as ``excluded
volume'' requirement in the literature. It only means that no
vertex can be occupied by more than one protein monomer.
We can finally remark that Definition \ref{def:SAW} concerns only one conformation, and not a \emph{sequence} of conformations that occurs in a folding process.

\subsection{Defining the SAW Requirements in the 2D model}

The next stage in the formalization of the protein folding process in the 2D model as a dynamical system is to take into account the self-avoiding walk (SAW) requirement, by restricting the set $\mathds{Z}/4\mathds{Z}^\mathsf{N}$ of all possible conformations to one of its subset.
That is, to define precisely the set $\mathcal{C}(P)$ of acceptable conformations of a protein $P$ having $\mathsf{N}+1$ residues.
This stage needs a clear definition of the SAW requirement.
However, as stated above, Definition \ref{def:SAW} only focus on a given conformation, but not on a complete folding process.
In our opinion, this requirement applied to the whole folding process can be understood at least in four ways.

\medskip

In the first and least restrictive approach, we call it ``$SAW_1$'', we only require that the studied conformation satisfies the Definition \ref{def:SAW}. 

\begin{definition}[$SAW_1$]
A conformation $c$ of $\mathds{Z}/4\mathds{Z}^{\mathsf{N}}$ satisfies the first self-avoiding walk requirement ($c \in SAW_1(\mathsf{N})$)
if this conformation is a path without crossing.
\end{definition}

It is not regarded whether this conformation is the result of a folding process that has started from $(0,0,\hdots,0)$.
Such a SAW requirement has been chosen by authors of~\cite{Crescenzi98} when they have proven the NP-completeness of the PSP problem.
It is usually the SAW requirement of biomathematicians, corresponding to the self-avoiding walks studied
in the book of Madras and Slade~\cite{Madras}. 
It is easy to convince ourselves that conformations of $SAW_1$ are the conformations that can be obtained by any chain growth algorithm, like in~\cite{Bornberg-Bauer:1997:CGA:267521.267528}.

As stated before, protein minimum energy structure can be considered statically or dynamically. 
In the latter case, we speak here of ``protein folding'',
since this concerns the dynamic process of folding. When folding on a lattice model, there is an underlying algorithm, 
such as Monte Carlo or genetic algorithm, and an allowed move set. In the following, for the sake of simplicity, only
pivot moves are investigated, but the corner and crankshaft moves should be further investigated~\cite{citeulike:118812}.

Basically, in the protein folding literature, there are methods that require the ``excluded volume'' condition during the
dynamic folding procedure, and those that do not require this condition. 
This is why the second proposed approach called $SAW_2$ requires that, starting from the initial condition $(0,0,\hdots, 0)$, we obtain by a succession of pivot moves a final conformation being a path without crossing.
In other words, we want that the final tree corresponding to the true 2D conformation has 2 vertices with 1 edge and $\mathsf{N}-2$ vertices with 2 edges. 
For instance, the folding process of Figure~\ref{saw2} is acceptable in $SAW_2$, even if it presents a cross in an intermediate conformation.
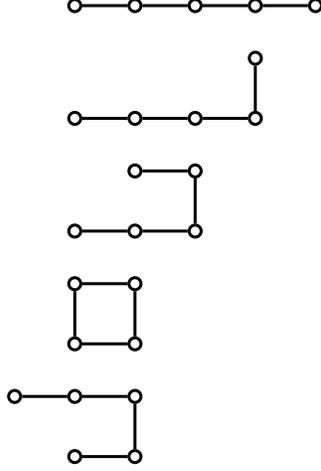
\begin{figure}
\centering
\caption{Folding process acceptable in $SAW_2$ but not in $SAW_3$. The folding sequence (-4,-3,-2,+4), having 3 anticlockwise and 1 clockwise pivot moves,
is applied here on the conformation 0000 represented as the upper line.}
\label{saw2}
\scalebox{1} 
{
\begin{pspicture}(0,-3.1)(4.2,3.1)
\pscircle[linewidth=0.04,dimen=outer](0.9,3.0){0.1}
\pscircle[linewidth=0.04,dimen=outer](1.7,3.0){0.1}
\pscircle[linewidth=0.04,dimen=outer](2.5,3.0){0.1}
\pscircle[linewidth=0.04,dimen=outer](3.3,3.0){0.1}
\pscircle[linewidth=0.04,dimen=outer](4.1,3.0){0.1}
\psline[linewidth=0.04cm](1.0,3.0)(1.6,3.0)
\psline[linewidth=0.04cm](1.8,3.0)(2.4,3.0)
\psline[linewidth=0.04cm](2.6,3.0)(3.2,3.0)
\psline[linewidth=0.04cm](3.4,3.0)(4.0,3.0)
\pscircle[linewidth=0.04,dimen=outer](0.9,1.5){0.1}
\pscircle[linewidth=0.04,dimen=outer](1.7,1.5){0.1}
\pscircle[linewidth=0.04,dimen=outer](2.5,1.5){0.1}
\pscircle[linewidth=0.04,dimen=outer](3.3,1.5){0.1}
\pscircle[linewidth=0.04,dimen=outer](3.3,2.3){0.1}
\psline[linewidth=0.04cm](1.0,1.5)(1.6,1.5)
\psline[linewidth=0.04cm](1.8,1.5)(2.4,1.5)
\psline[linewidth=0.04cm](2.6,1.5)(3.2,1.5)
\psline[linewidth=0.04cm](3.3,1.6)(3.3,2.2)
\pscircle[linewidth=0.04,dimen=outer](0.9,0.0){0.1}
\pscircle[linewidth=0.04,dimen=outer](1.7,0.0){0.1}
\pscircle[linewidth=0.04,dimen=outer](2.5,0.0){0.1}
\psline[linewidth=0.04cm](1.0,0.0)(1.6,0.0)
\psline[linewidth=0.04cm](1.8,0.0)(2.4,0.0)
\pscircle[linewidth=0.04,dimen=outer](2.5,0.8){0.1}
\psline[linewidth=0.04cm](2.5,0.1)(2.5,0.7)
\pscircle[linewidth=0.04,dimen=outer](1.7,0.8){0.1}
\psline[linewidth=0.04cm](1.8,0.8)(2.4,0.8)
\pscircle[linewidth=0.04,dimen=outer](0.9,-1.5){0.1}
\pscircle[linewidth=0.04,dimen=outer](1.7,-1.5){0.1}
\psline[linewidth=0.04cm](1.0,-1.5)(1.6,-1.5)
\pscircle[linewidth=0.04,dimen=outer](1.7,-0.7){0.1}
\psline[linewidth=0.04cm](1.7,-1.4)(1.7,-0.8)
\pscircle[linewidth=0.04,dimen=outer](0.9,-0.7){0.1}
\psline[linewidth=0.04cm](1.0,-0.7)(1.6,-0.7)
\psline[linewidth=0.04cm](0.9,-1.4)(0.9,-0.8)
\pscircle[linewidth=0.04,dimen=outer](0.9,-3.0){0.1}
\pscircle[linewidth=0.04,dimen=outer](1.7,-3.0){0.1}
\psline[linewidth=0.04cm](1.0,-3.0)(1.6,-3.0)
\pscircle[linewidth=0.04,dimen=outer](1.7,-2.2){0.1}
\psline[linewidth=0.04cm](1.7,-2.9)(1.7,-2.3)
\pscircle[linewidth=0.04,dimen=outer](0.9,-2.2){0.1}
\psline[linewidth=0.04cm](1.0,-2.2)(1.6,-2.2)
\pscircle[linewidth=0.04,dimen=outer](0.1,-2.2){0.1}
\psline[linewidth=0.04cm](0.2,-2.2)(0.8,-2.2)
\end{pspicture} 
}
\end{figure}
Such an approach corresponds to programs that start from the initial conformation $(0,0, \hdots, 0)$, fold it several times according to their embedding functions, and then obtain a final conformation on which the SAW property is checked: only the last conformation has to satisfy the Definition \ref{def:SAW}. More precisely,

\begin{definition}[$SAW_2$]
A conformation $c$ of $\mathds{Z}/4\mathds{Z}^{\mathsf{N}}$ satisfies the second self-avoiding walk requirement $SAW_2$ 
if $c \in SAW_1(\mathsf{N})$ and a finite sequence $(F^1,F^2, \hdots,F^n)$ of $\llbracket -\mathsf{N}, \mathsf{N} \rrbracket$ can be found such that $$\left(c,(0,0,\hdots)\right) = G^n\left((0,0,\hdots, 0),\left(F^1,F^2, \hdots, F^n,0, \hdots\right)\right).$$
$SAW_2(\mathsf{N})$ will denote the set of all conformations satisfying this requirement.
\end{definition}

In the next approach, namely the $SAW_3$ requirement, it is demanded that each intermediate conformation, between the initial one and the returned (final) one, satisfies the Definition \ref{def:SAW}.
It restricts the set of all conformations $\mathds{Z}/4\mathds{Z}^\mathsf{N}$, for a given $\mathsf{N}$, to the subset $\mathfrak{C}_\mathsf{N}$ of conformations $(C_1,\hdots, C_\mathsf{N})$ such that $\exists n \in \mathds{N}^*,$ $\exists k_1, \hdots, k_n \in \llbracket -\mathsf{N}; \mathsf{N} \rrbracket$, $$(C_1, \hdots, C_\mathsf{N}) = G^n\left( (0,0, \hdots, 0); (k_1, \hdots, k_n) \right)$$ \emph{and} $\forall i \leqslant n$, the conformation $G^i\left( (0, \hdots, 0); (k_1, \hdots, k_n) \right)$ 
is a path  without crossing.
Let us define it,

\begin{definition}[$SAW_3$]
A conformation $c$ of $\mathds{Z}/4\mathds{Z}^{\mathsf{N}}$ satisfies the third self-avoiding walk requirement 
if $c \in SAW_1(\mathsf{N})$ and a finite sequence $(F^1,F^2, \hdots,F^n)$ of $\llbracket -\mathsf{N}, \mathsf{N} \rrbracket$ can be found such that:
\begin{itemize}
\item $\forall k \in \llbracket 1, n \rrbracket$, the conformation $c_k$ of $G^k\left((0,0,\hdots, 0),\left(F^1,F^2, \hdots, F^n,0, \hdots\right)\right)$ is in $SAW_1(\mathsf{N})$, that is, it is a path without crossing.
\item $\left(c,(0,0,\hdots)\right) = G^n\left((0,0,\hdots, 0),\left(F^1,F^2, \hdots, F^n,0, \hdots\right)\right).$
\end{itemize}
$SAW_3(\mathsf{N})$ will denote the set of all conformations satisfying this requirement.
\end{definition}

The ``SAW requirement'' in the bioinformatics literature 
refers either to the $SAW_2$ or to the $SAW_3$
folding process requirement~\cite{Braxenthaler97,DBLP:conf/cec/HiggsSHS10,DBLP:conf/cec/HorvathC10}. For instance
in \cite{DBLP:conf/cec/IslamC10}, random sequences of $\llbracket 0,3\rrbracket$ are picked and the excluded volume 
requirement (as recalled previously, no vertex can be occupied by more than one protein monomer) is then checked, meaning that this research work 
takes place into $SAW_2$.
Contrarily, in~\cite{Unger93},  the
Monte Carlo search for folding simulations algorithm repeats
the step:
``from conformation $S_i$ with energy $E_i$ make a pivot 
move to get $S_j$ with $E_j$'' until $S_j$ is valid, so
Unger and Moult are in $SAW_3$.
Algorithms that refine progressively their solutions (following
a genetic algorithm or a swarm particle approach for instance)
are often of this kind.
In these $SAW_3$ related approaches, the acceptable conformations are obtained starting from the initial conformation $(0,0, \hdots, 0)$ and are such that all the intermediate conformations satisfy the Definition \ref{def:SAW}. 

Finally, the $SAW_4$ approach is a $SAW_3$ requirement in which there is no intersection of vertex or edge during the transformation of one conformation to another. For instance, the transformation of Figure \ref{saw4} is authorized in the $SAW_3$ approach but refused in the $SAW_4$ one: during the rotation around the residue having a cross, the structure after this residue will intersect the remainder of the ``protein''.
In this last approach it is impossible, for a protein folding from one plane conformation to another plane one, to use the whole space to achieve this folding.
\begin{figure}
\centering
\caption{Folding process acceptable in $SAW_3$ but not in $SAW_4$. It is in $SAW_3$ as $333300111110333333222211111100333$ (the right panel) is $3333001111103333332222111111f^{-1}(1)f^{-1}(1)f^{-1}(0)f^{-1}(0)f^{-1}(0)f^{-1}(0)$, which corresponds
to a clockwise pivot move of residue number 28 in $SAW_3$.
Figure~\ref{pasDansSaw4} explains why this folding process
is not acceptable in $SAW_4$.}
\label{saw4}
\scalebox{0.5} 
{
\begin{pspicture}(0,-6.6)(16.8,6.6)
\psframe[linewidth=0.04,dimen=outer](2.0,3.4)(1.6,3.0)
\psframe[linewidth=0.04,dimen=outer](3.6,5.0)(3.2,4.6)
\psframe[linewidth=0.04,dimen=outer](2.0,1.8)(1.6,1.4)
\psframe[linewidth=0.04,dimen=outer](2.0,5.0)(1.6,4.6)
\psline[linewidth=0.04cm,arrowsize=0.05291667cm 2.0,arrowlength=1.4,arrowinset=0.4]{->}(3.6,4.8)(4.8,4.8)
\psline[linewidth=0.04cm,arrowsize=0.05291667cm 2.0,arrowlength=1.4,arrowinset=0.4]{->}(2.0,4.8)(3.2,4.8)
\psline[linewidth=0.04cm,arrowsize=0.05291667cm 2.0,arrowlength=1.4,arrowinset=0.4]{->}(1.6,6.4)(0.4,6.4)
\psline[linewidth=0.04cm,arrowsize=0.05291667cm 2.0,arrowlength=1.4,arrowinset=0.4]{->}(1.8,1.8)(1.8,3.0)
\psline[linewidth=0.04cm,arrowsize=0.05291667cm 2.0,arrowlength=1.4,arrowinset=0.4]{->}(1.8,3.4)(1.8,4.6)
\psframe[linewidth=0.04,dimen=outer](5.2,5.0)(4.8,4.6)
\psframe[linewidth=0.04,dimen=outer](6.8,5.0)(6.4,4.6)
\psframe[linewidth=0.04,dimen=outer](5.2,3.4)(4.8,3.0)
\psframe[linewidth=0.04,dimen=outer](6.8,3.4)(6.4,3.0)
\psframe[linewidth=0.04,dimen=outer](5.2,1.8)(4.8,1.4)
\psframe[linewidth=0.04,dimen=outer](6.8,1.8)(6.4,1.4)
\psframe[linewidth=0.04,dimen=outer](0.4,-4.6)(0.0,-5.0)
\psframe[linewidth=0.04,dimen=outer](5.2,3.4)(4.8,3.0)
\psframe[linewidth=0.04,dimen=outer](0.4,-6.2)(0.0,-6.6)
\psframe[linewidth=0.04,dimen=outer](5.2,0.2)(4.8,-0.2)
\psframe[linewidth=0.04,dimen=outer](6.8,0.2)(6.4,-0.2)
\psline[linewidth=0.04cm,arrowsize=0.05291667cm 2.0,arrowlength=1.4,arrowinset=0.4]{->}(5.0,4.6)(5.0,3.4)
\psline[linewidth=0.04cm,arrowsize=0.05291667cm 2.0,arrowlength=1.4,arrowinset=0.4]{->}(5.0,3.0)(5.0,1.8)
\psline[linewidth=0.04cm,arrowsize=0.05291667cm 2.0,arrowlength=1.4,arrowinset=0.4]{->}(5.0,1.4)(5.0,0.2)
\psline[linewidth=0.04cm,arrowsize=0.05291667cm 2.0,arrowlength=1.4,arrowinset=0.4]{->}(6.6,0.2)(6.6,1.4)
\psline[linewidth=0.04cm,arrowsize=0.05291667cm 2.0,arrowlength=1.4,arrowinset=0.4]{->}(6.6,1.8)(6.6,3.0)
\psline[linewidth=0.04cm,arrowsize=0.05291667cm 2.0,arrowlength=1.4,arrowinset=0.4]{->}(6.6,3.4)(6.6,4.6)
\psframe[linewidth=0.04,dimen=outer](6.8,6.6)(6.4,6.2)
\psframe[linewidth=0.04,dimen=outer](3.6,6.6)(3.2,6.2)
\psframe[linewidth=0.04,dimen=outer](2.0,6.6)(1.6,6.2)
\psframe[linewidth=0.04,dimen=outer](5.2,6.6)(4.8,6.2)
\psline[linewidth=0.04cm,arrowsize=0.05291667cm 2.0,arrowlength=1.4,arrowinset=0.4]{->}(6.4,6.4)(5.2,6.4)
\psline[linewidth=0.04cm,arrowsize=0.05291667cm 2.0,arrowlength=1.4,arrowinset=0.4]{->}(4.8,6.4)(3.6,6.4)
\psline[linewidth=0.04cm,arrowsize=0.05291667cm 2.0,arrowlength=1.4,arrowinset=0.4]{->}(3.2,6.4)(2.0,6.4)
\psline[linewidth=0.04cm,arrowsize=0.05291667cm 2.0,arrowlength=1.4,arrowinset=0.4]{->}(6.6,5.0)(6.6,6.2)
\psline[linewidth=0.04cm,arrowsize=0.05291667cm 2.0,arrowlength=1.4,arrowinset=0.4]{->}(0.2,-5.0)(0.2,-6.2)
\psline[linewidth=0.04cm,arrowsize=0.05291667cm 2.0,arrowlength=1.4,arrowinset=0.4]{->}(0.2,-3.4)(0.2,-4.6)
\psframe[linewidth=0.04,dimen=outer](0.4,3.4)(0.0,3.0)
\psframe[linewidth=0.04,dimen=outer](0.4,1.8)(0.0,1.4)
\psframe[linewidth=0.04,dimen=outer](0.4,5.0)(0.0,4.6)
\psframe[linewidth=0.04,dimen=outer](0.4,6.6)(0.0,6.2)
\psframe[linewidth=0.04,dimen=outer](0.4,-3.0)(0.0,-3.4)
\psline[linewidth=0.04cm,arrowsize=0.05291667cm 2.0,arrowlength=1.4,arrowinset=0.4]{->}(0.2,6.2)(0.2,5.0)
\psline[linewidth=0.04cm,arrowsize=0.05291667cm 2.0,arrowlength=1.4,arrowinset=0.4]{->}(0.2,4.6)(0.2,3.4)
\psline[linewidth=0.04cm,arrowsize=0.05291667cm 2.0,arrowlength=1.4,arrowinset=0.4]{->}(0.2,3.0)(0.2,1.8)
\psline[linewidth=0.04cm,arrowsize=0.05291667cm 2.0,arrowlength=1.4,arrowinset=0.4]{->}(0.2,1.4)(0.2,0.2)
\psline[linewidth=0.04cm](0.0,-3.0)(0.4,-3.4)
\psline[linewidth=0.04cm](0.0,-3.4)(0.4,-3.0)
\psframe[linewidth=0.04,dimen=outer](12.0,3.4)(11.6,3.0)
\psframe[linewidth=0.04,dimen=outer](13.6,5.0)(13.2,4.6)
\psframe[linewidth=0.04,dimen=outer](12.0,1.8)(11.6,1.4)
\psframe[linewidth=0.04,dimen=outer](12.0,5.0)(11.6,4.6)
\psline[linewidth=0.04cm,arrowsize=0.05291667cm 2.0,arrowlength=1.4,arrowinset=0.4]{->}(13.6,4.8)(14.8,4.8)
\psline[linewidth=0.04cm,arrowsize=0.05291667cm 2.0,arrowlength=1.4,arrowinset=0.4]{->}(12.0,4.8)(13.2,4.8)
\psline[linewidth=0.04cm,arrowsize=0.05291667cm 2.0,arrowlength=1.4,arrowinset=0.4]{->}(11.6,6.4)(10.4,6.4)
\psline[linewidth=0.04cm,arrowsize=0.05291667cm 2.0,arrowlength=1.4,arrowinset=0.4]{->}(11.8,1.8)(11.8,3.0)
\psline[linewidth=0.04cm,arrowsize=0.05291667cm 2.0,arrowlength=1.4,arrowinset=0.4]{->}(11.8,3.4)(11.8,4.6)
\psframe[linewidth=0.04,dimen=outer](13.6,3.4)(13.2,3.0)
\psframe[linewidth=0.04,dimen=outer](15.2,5.0)(14.8,4.6)
\psline[linewidth=0.04cm,arrowsize=0.05291667cm 2.0,arrowlength=1.4,arrowinset=0.4]{->}(15.2,-3.2)(16.4,-3.2)
\psframe[linewidth=0.04,dimen=outer](16.8,5.0)(16.4,4.6)
\psframe[linewidth=0.04,dimen=outer](15.2,3.4)(14.8,3.0)
\psframe[linewidth=0.04,dimen=outer](16.8,3.4)(16.4,3.0)
\psframe[linewidth=0.04,dimen=outer](13.6,1.8)(13.2,1.4)
\psframe[linewidth=0.04,dimen=outer](15.2,1.8)(14.8,1.4)
\psframe[linewidth=0.04,dimen=outer](16.8,1.8)(16.4,1.4)
\psframe[linewidth=0.04,dimen=outer](12.0,-3.0)(11.6,-3.4)
\psframe[linewidth=0.04,dimen=outer](15.2,3.4)(14.8,3.0)
\psframe[linewidth=0.04,dimen=outer](13.6,-3.0)(13.2,-3.4)
\psframe[linewidth=0.04,dimen=outer](15.2,-3.0)(14.8,-3.4)
\psframe[linewidth=0.04,dimen=outer](16.8,-3.0)(16.4,-3.4)
\psline[linewidth=0.04cm,arrowsize=0.05291667cm 2.0,arrowlength=1.4,arrowinset=0.4]{->}(15.0,4.6)(15.0,3.4)
\psline[linewidth=0.04cm,arrowsize=0.05291667cm 2.0,arrowlength=1.4,arrowinset=0.4]{->}(15.0,3.0)(15.0,1.8)
\psline[linewidth=0.04cm,arrowsize=0.05291667cm 2.0,arrowlength=1.4,arrowinset=0.4]{->}(15.0,1.4)(15.0,0.2)
\psline[linewidth=0.04cm,arrowsize=0.05291667cm 2.0,arrowlength=1.4,arrowinset=0.4]{->}(13.4,1.8)(13.4,3.0)
\psline[linewidth=0.04cm,arrowsize=0.05291667cm 2.0,arrowlength=1.4,arrowinset=0.4]{->}(16.6,0.2)(16.6,1.4)
\psline[linewidth=0.04cm,arrowsize=0.05291667cm 2.0,arrowlength=1.4,arrowinset=0.4]{->}(16.6,1.8)(16.6,3.0)
\psline[linewidth=0.04cm,arrowsize=0.05291667cm 2.0,arrowlength=1.4,arrowinset=0.4]{->}(16.6,3.4)(16.6,4.6)
\psframe[linewidth=0.04,dimen=outer](16.8,6.6)(16.4,6.2)
\psframe[linewidth=0.04,dimen=outer](13.6,6.6)(13.2,6.2)
\psframe[linewidth=0.04,dimen=outer](12.0,6.6)(11.6,6.2)
\psframe[linewidth=0.04,dimen=outer](15.2,6.6)(14.8,6.2)
\psline[linewidth=0.04cm,arrowsize=0.05291667cm 2.0,arrowlength=1.4,arrowinset=0.4]{->}(16.4,6.4)(15.2,6.4)
\psline[linewidth=0.04cm,arrowsize=0.05291667cm 2.0,arrowlength=1.4,arrowinset=0.4]{->}(14.8,6.4)(13.6,6.4)
\psline[linewidth=0.04cm,arrowsize=0.05291667cm 2.0,arrowlength=1.4,arrowinset=0.4]{->}(13.2,6.4)(12.0,6.4)
\psline[linewidth=0.04cm,arrowsize=0.05291667cm 2.0,arrowlength=1.4,arrowinset=0.4]{->}(16.6,5.0)(16.6,6.2)
\psline[linewidth=0.04cm,arrowsize=0.05291667cm 2.0,arrowlength=1.4,arrowinset=0.4]{->}(12.0,-3.2)(13.2,-3.2)
\psline[linewidth=0.04cm,arrowsize=0.05291667cm 2.0,arrowlength=1.4,arrowinset=0.4]{->}(10.4,-3.2)(11.6,-3.2)
\psframe[linewidth=0.04,dimen=outer](10.4,3.4)(10.0,3.0)
\psframe[linewidth=0.04,dimen=outer](10.4,1.8)(10.0,1.4)
\psframe[linewidth=0.04,dimen=outer](10.4,5.0)(10.0,4.6)
\psframe[linewidth=0.04,dimen=outer](10.4,6.6)(10.0,6.2)
\psframe[linewidth=0.04,dimen=outer](10.4,-3.0)(10.0,-3.4)
\psline[linewidth=0.04cm,arrowsize=0.05291667cm 2.0,arrowlength=1.4,arrowinset=0.4]{->}(10.2,6.2)(10.2,5.0)
\psline[linewidth=0.04cm,arrowsize=0.05291667cm 2.0,arrowlength=1.4,arrowinset=0.4]{->}(10.2,4.6)(10.2,3.4)
\psline[linewidth=0.04cm,arrowsize=0.05291667cm 2.0,arrowlength=1.4,arrowinset=0.4]{->}(10.2,3.0)(10.2,1.8)
\psline[linewidth=0.04cm](10.0,-3.0)(10.4,-3.4)
\psline[linewidth=0.04cm](10.0,-3.4)(10.4,-3.0)
\psline[linewidth=0.04cm,arrowsize=0.05291667cm 2.0,arrowlength=1.4,arrowinset=0.4]{->}(0.4,-6.4)(1.6,-6.4)
\psframe[linewidth=0.04,dimen=outer](2.0,-6.2)(1.6,-6.6)
\psframe[linewidth=0.04,dimen=outer](3.6,-6.2)(3.2,-6.6)
\psline[linewidth=0.04cm,arrowsize=0.05291667cm 2.0,arrowlength=1.4,arrowinset=0.4]{->}(2.0,-6.4)(3.2,-6.4)
\psframe[linewidth=0.04,dimen=outer](12.0,0.2)(11.6,-0.2)
\psline[linewidth=0.04cm,arrowsize=0.05291667cm 2.0,arrowlength=1.4,arrowinset=0.4]{->}(11.8,0.2)(11.8,1.4)
\psframe[linewidth=0.04,dimen=outer](13.6,0.2)(13.2,-0.2)
\psline[linewidth=0.04cm,arrowsize=0.05291667cm 2.0,arrowlength=1.4,arrowinset=0.4]{->}(13.4,0.2)(13.4,1.4)
\psframe[linewidth=0.04,dimen=outer](10.4,0.2)(10.0,-0.2)
\psline[linewidth=0.04cm,arrowsize=0.05291667cm 2.0,arrowlength=1.4,arrowinset=0.4]{->}(10.2,1.4)(10.2,0.2)
\psframe[linewidth=0.04,dimen=outer](12.0,-1.4)(11.6,-1.8)
\psline[linewidth=0.04cm,arrowsize=0.05291667cm 2.0,arrowlength=1.4,arrowinset=0.4]{->}(11.8,-1.4)(11.8,-0.2)
\psframe[linewidth=0.04,dimen=outer](13.6,-1.4)(13.2,-1.8)
\psline[linewidth=0.04cm,arrowsize=0.05291667cm 2.0,arrowlength=1.4,arrowinset=0.4]{->}(13.4,-1.4)(13.4,-0.2)
\psframe[linewidth=0.04,dimen=outer](10.4,-1.4)(10.0,-1.8)
\psline[linewidth=0.04cm,arrowsize=0.05291667cm 2.0,arrowlength=1.4,arrowinset=0.4]{->}(10.2,-0.2)(10.2,-1.4)
\psline[linewidth=0.04cm,arrowsize=0.05291667cm 2.0,arrowlength=1.4,arrowinset=0.4]{->}(13.4,-3.0)(13.4,-1.8)
\psline[linewidth=0.04cm,arrowsize=0.05291667cm 2.0,arrowlength=1.4,arrowinset=0.4]{->}(10.2,-1.8)(10.2,-3.0)
\psframe[linewidth=0.04,dimen=outer](15.2,0.2)(14.8,-0.2)
\psframe[linewidth=0.04,dimen=outer](16.8,0.2)(16.4,-0.2)
\psframe[linewidth=0.04,dimen=outer](15.2,-1.4)(14.8,-1.8)
\psframe[linewidth=0.04,dimen=outer](16.8,-1.4)(16.4,-1.8)
\psframe[linewidth=0.04,dimen=outer](15.2,0.2)(14.8,-0.2)
\psline[linewidth=0.04cm,arrowsize=0.05291667cm 2.0,arrowlength=1.4,arrowinset=0.4]{->}(15.0,-1.8)(15.0,-3.0)
\psline[linewidth=0.04cm,arrowsize=0.05291667cm 2.0,arrowlength=1.4,arrowinset=0.4]{->}(15.0,-0.2)(15.0,-1.4)
\psline[linewidth=0.04cm,arrowsize=0.05291667cm 2.0,arrowlength=1.4,arrowinset=0.4]{->}(16.6,-3.0)(16.6,-1.8)
\psline[linewidth=0.04cm,arrowsize=0.05291667cm 2.0,arrowlength=1.4,arrowinset=0.4]{->}(16.6,-1.4)(16.6,-0.2)
\psline[linewidth=0.04cm,arrowsize=0.05291667cm 2.0,arrowlength=1.4,arrowinset=0.4]{->}(5.2,-3.2)(6.4,-3.2)
\psframe[linewidth=0.04,dimen=outer](5.2,-3.0)(4.8,-3.4)
\psframe[linewidth=0.04,dimen=outer](6.8,-3.0)(6.4,-3.4)
\psframe[linewidth=0.04,dimen=outer](5.2,-1.4)(4.8,-1.8)
\psframe[linewidth=0.04,dimen=outer](6.8,-1.4)(6.4,-1.8)
\psline[linewidth=0.04cm,arrowsize=0.05291667cm 2.0,arrowlength=1.4,arrowinset=0.4]{->}(5.0,-1.8)(5.0,-3.0)
\psline[linewidth=0.04cm,arrowsize=0.05291667cm 2.0,arrowlength=1.4,arrowinset=0.4]{->}(5.0,-0.2)(5.0,-1.4)
\psline[linewidth=0.04cm,arrowsize=0.05291667cm 2.0,arrowlength=1.4,arrowinset=0.4]{->}(6.6,-3.0)(6.6,-1.8)
\psline[linewidth=0.04cm,arrowsize=0.05291667cm 2.0,arrowlength=1.4,arrowinset=0.4]{->}(6.6,-1.4)(6.6,-0.2)
\psframe[linewidth=0.04,dimen=outer](2.0,0.2)(1.6,-0.2)
\psline[linewidth=0.04cm,arrowsize=0.05291667cm 2.0,arrowlength=1.4,arrowinset=0.4]{->}(1.8,0.2)(1.8,1.4)
\psframe[linewidth=0.04,dimen=outer](2.0,-1.4)(1.6,-1.8)
\psline[linewidth=0.04cm,arrowsize=0.05291667cm 2.0,arrowlength=1.4,arrowinset=0.4]{->}(1.8,-1.4)(1.8,-0.2)
\psframe[linewidth=0.04,dimen=outer](0.4,0.2)(0.0,-0.2)
\psframe[linewidth=0.04,dimen=outer](0.4,-1.4)(0.0,-1.8)
\psline[linewidth=0.04cm,arrowsize=0.05291667cm 2.0,arrowlength=1.4,arrowinset=0.4]{->}(0.2,-0.2)(0.2,-1.4)
\psline[linewidth=0.04cm,arrowsize=0.05291667cm 2.0,arrowlength=1.4,arrowinset=0.4]{->}(0.2,-1.8)(0.2,-3.0)
\psline[linewidth=0.04cm,arrowsize=0.05291667cm 2.0,arrowlength=1.4,arrowinset=0.4]{->}(3.6,-6.4)(4.8,-6.4)
\psframe[linewidth=0.04,dimen=outer](5.2,-6.2)(4.8,-6.6)
\psframe[linewidth=0.04,dimen=outer](6.8,-6.2)(6.4,-6.6)
\psline[linewidth=0.04cm,arrowsize=0.05291667cm 2.0,arrowlength=1.4,arrowinset=0.4]{->}(5.2,-6.4)(6.4,-6.4)
\end{pspicture} 
}
\end{figure}

\begin{figure}
\caption{It is impossible to make the rotation around the crossed square, in such a way that the tail does not intersect the head structure during the rotation, so the folding process of Fig.~\ref{saw4} is not in $SAW_4$.}
\label{pasDansSaw4}
\centering
\includegraphics[scale=0.45]{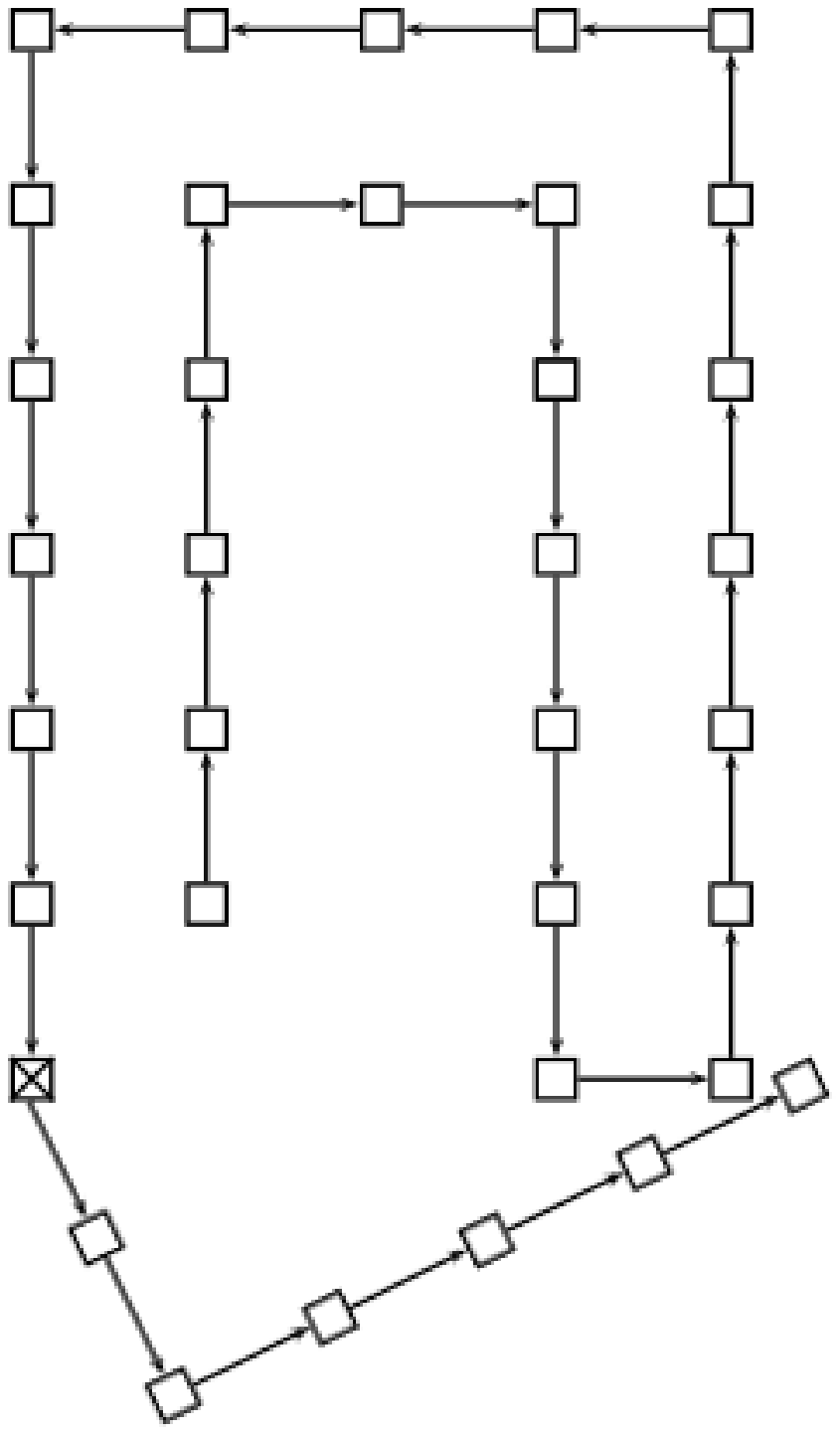}
\end{figure}

This last requirement is the closest approach of a true natural protein folding. It is related to researches that consider
more complicated moves than the simple pivot move~\cite{citeulike:118812}.

\section{Relations between the SAW requirements}

For $i \in \{1,2,3,4\}$, the set $\displaystyle{\bigcup_{n \in \mathds{N}^*}SAW_i(n)}$ will be simply written $SAW_i$.
The following inclusions hold obviously: $$SAW_4 \subseteq SAW_3 \subseteq SAW_2 \subseteq SAW_1$$ due to the definitions of the SAW requirements presented in the previous section.
Additionally, Figure \ref{saw4} shows that $SAW_4 \neq SAW_3$, thus we have,

\begin{proposition}
\label{subsets}
$SAW_4 \subsetneq SAW_3 \subseteq SAW_2 \subseteq SAW_1$. 
\end{proposition}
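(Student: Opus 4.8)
The plan is to separate the four-term relation into its three non-strict inclusions and its single strict one, because the former are immediate from the nesting of the definitions while the latter is the only place carrying genuine geometric content. I would state at the outset that the all-zero conformation $(0,\dots,0)$ maps under $p$ to a straight segment with $\mathsf{N}+1$ distinct vertices, hence is itself a path without crossing; this guarantees that the admissible initial point demanded by the $SAW_2$ and $SAW_3$ definitions is available.

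First I would dispatch $SAW_4 \subseteq SAW_3 \subseteq SAW_2 \subseteq SAW_1$ by reading off the definitions. For $SAW_2 \subseteq SAW_1$ there is nothing to prove, since membership in $SAW_2$ explicitly requires $c \in SAW_1(\mathsf{N})$. For $SAW_3 \subseteq SAW_2$, if $c \in SAW_3$ then by definition $c \in SAW_1(\mathsf{N})$ and there is a finite sequence with $(c,(0,0,\dots)) = G^n((0,\dots,0),(F^1,\dots,F^n,0,\dots))$; this is exactly the witness required by $SAW_2$, the extra $SAW_3$ hypothesis that each intermediate $c_k$ is a path without crossing being simply discarded. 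For $SAW_4 \subseteq SAW_3$, the set $SAW_4$ is defined as $SAW_3$ augmented with the ``no intersection of vertex or edge during the transformation'' condition, so forgetting that extra condition yields membership in $SAW_3$. Each step is therefore a one-line consequence of dropping a hypothesis.

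It then remains to establish the strictness $SAW_3 \neq SAW_4$, i.e.\ to exhibit a conformation in $SAW_3 \setminus SAW_4$. I would take $c=333300111110333333222211111100333$ from the right panel of Figure~\ref{saw4}. Membership $c \in SAW_3$ is read directly off that figure: it displays a sequence of pivot moves from $(0,\dots,0)$ ending in the clockwise pivot of residue~$28$, for which every intermediate conformation is a path without crossing, so Definition~\ref{def:SAW} holds at each step. For the failure of $SAW_4$, the informal content is supplied by Figure~\ref{pasDansSaw4}: during the final rotation the tail of the chain must sweep across the region already occupied by the head, so the displayed folding process violates the $SAW_4$ no-intersection condition.

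The main obstacle is precisely the passage from ``\emph{this particular} folding process is not $SAW_4$-admissible'' to ``$c \notin SAW_4$'', since the latter requires that \emph{no} admissible sequence whatsoever reaches $c$. To close this gap rigorously I would add a geometric argument showing the obstruction is intrinsic to $c$ rather than to the chosen moves: because $c$ encloses the pivoted residue inside an already-folded pocket, the segment following the pivot point can only be brought into place by a quarter-turn whose swept quarter-disc necessarily meets an occupied edge, and this collision is forced independently of which pivot sequence and which order of folds is used to assemble $c$. Quantifying over all admissible sequences and proving that the swept region always collides is the delicate step; by comparison the three non-strict inclusions and the $SAW_3$-membership of $c$ are routine.
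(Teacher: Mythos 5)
Your proof follows the same route as the paper: the three non-strict inclusions are read off the definitions as immediate consequences of dropping hypotheses, and the strictness $SAW_4 \subsetneq SAW_3$ is witnessed by the conformation of Figure~\ref{saw4}, with Figure~\ref{pasDansSaw4} supplying the geometric obstruction. You are in fact more careful than the paper, which simply asserts that the figure shows $SAW_4 \neq SAW_3$ without addressing the quantification over \emph{all} admissible folding sequences that you correctly flag as the delicate point.
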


Let us investigate more precisely the links between $SAW_1, SAW_2$, and $SAW_3$.

\subsection{$SAW_1$ is $SAW_2$}

Let us now prove that,

\begin{proposition}
$\forall n \in \mathds{N}, SAW_1(n)=SAW_2(n)$.
\end{proposition}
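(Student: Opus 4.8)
The plan is to prove the two inclusions separately. The inclusion $SAW_2(n) \subseteq SAW_1(n)$ is immediate, since the definition of $SAW_2$ explicitly demands $c \in SAW_1(n)$. All the work lies in the reverse inclusion $SAW_1(n) \subseteq SAW_2(n)$: given an arbitrary path without crossing $c = (c_1, \hdots, c_n)$, I must exhibit a finite pivot sequence that carries the straight conformation $(0, 0, \hdots, 0)$ to $c$. The crucial simplification is that $SAW_2$ imposes the non-crossing condition only on the \emph{final} conformation $c$, while the intermediate conformations are allowed to self-intersect. Hence the reverse inclusion reduces to a purely algebraic reachability question inside the group $\mathds{Z}/4\mathds{Z}^n$, with no geometric constraint to respect along the way.

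First I would record the effect of a single pivot move on the conformation vector. For $k \in \llbracket 1; n \rrbracket$, applying $f_k$ with $k>0$ adds the suffix vector $u_k := (0, \hdots, 0, 1, \hdots, 1)$ --- with $1$'s in positions $|k|$ through $n$ --- componentwise modulo $4$, while the move $-k$ subtracts it. Thus the whole set of pivot moves acts on $\mathds{Z}/4\mathds{Z}^n$ by translations through the vectors $\pm u_1, \hdots, \pm u_n$, and a sequence of moves started at the origin reaches exactly the subgroup of $\mathds{Z}/4\mathds{Z}^n$ generated by $u_1, \hdots, u_n$; since these translations commute, the order in which the moves are concatenated is irrelevant to the final conformation.

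The key step is to observe that $u_1, \hdots, u_n$ in fact generate all of $\mathds{Z}/4\mathds{Z}^n$. Indeed, with the convention $u_{n+1} = 0$, the difference $u_k - u_{k+1}$ is the standard basis vector $\varepsilon_k$ (the lone $1$ in position $k$), so every $\varepsilon_k$, and hence every element of $\mathds{Z}/4\mathds{Z}^n$, is a combination of the $u_k$. Concretely, writing $c_0 = 0$, the target decomposes as $c = \sum_{k=1}^{n} (c_k - c_{k-1})\, u_k$, which dictates exactly how many pivots at each position are required: perform $a_k := (c_k - c_{k-1}) \bmod 4$ clockwise pivot moves $+k$ for each $k$ from $1$ to $n$. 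Concatenating these finitely many moves yields a finite sequence $(F^1, \hdots, F^m)$ with $G^m\!\left((0,\hdots,0),(F^1,\hdots,F^m,0,\hdots)\right) = (c,(0,0,\hdots))$. As $c$ is assumed to lie in $SAW_1(n)$, this shows $c \in SAW_2(n)$ and closes the reverse inclusion.

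I do not expect a genuine obstacle here; once the reduction is made, the argument is elementary linear algebra over $\mathds{Z}/4\mathds{Z}$. The single point demanding care is to insist explicitly that $SAW_2$ places no requirement whatsoever on intermediate conformations, so that the possibly self-crossing states produced by my construction are harmless. This is precisely the feature distinguishing $SAW_2$ from $SAW_3$, and it is what allows the suffix vectors $u_k$ to serve as a full generating set rather than a constrained one; any attempt to mimic the proof for $SAW_3$ would fail exactly because intermediate crossings would then have to be avoided.
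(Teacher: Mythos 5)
Your proof is correct, but it takes a genuinely different route from the paper's. The paper proves the nontrivial inclusion $SAW_1(n) \subseteq SAW_2(n)$ by induction on the number of indices $i$ with $c_{i+1} \neq c_i$: it locates the first such index, applies the single pivot move that cancels that change, and invokes the induction hypothesis on the resulting conformation, which has one change fewer (so it is really an unfolding argument, run from $c$ back to a constant sequence). You instead give a closed-form forward construction: the pivot moves act on $\mathds{Z}/4\mathds{Z}^n$ as translations by the suffix vectors $\pm u_k$, these generate the whole group since $u_k - u_{k+1} = \varepsilon_k$, and the telescoping identity $c = \sum_{k=1}^{n}(c_k - c_{k-1})u_k$ prescribes exactly which moves to perform. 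Both arguments ultimately rest on the same observation --- a pivot at position $k$ alters only the single difference $c_k - c_{k-1}$ and leaves all other consecutive differences untouched --- but your version buys an explicit folding sequence together with a length bound of at most $3n$ moves, makes the commutativity of the moves (hence the irrelevance of their order) transparent, and dispenses with the paper's slightly loose preliminary remark that starting from $(0,\ldots,0)$ is ``equivalent'' to starting from $(j,\ldots,j)$, since a pivot at position $1$ already reaches any constant conformation. You are also right to flag, as the essential enabling fact, that $SAW_2$ imposes no condition on intermediate conformations; that is precisely why the reachability question degenerates to linear algebra over $\mathds{Z}/4\mathds{Z}$, and why neither your argument nor the paper's survives the passage to $SAW_3$.
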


\begin{proof}
We need to prove that $SAW_1(n) \subset SAW_2(n)$, \emph{i.e.}, that 
any conformation of $SAW_1(n)$ can be obtained from $(0,0,..,0)$ by
operating a sequence of (anti)clockwise pivot moves. 

Obviously, to start from the conformation $(0,0,..,0)$ is equivalent than
to start with the conformation $(c,c,...,c)$, where $c \in \{0,1,2,3\}$.
Thus the initial configuration is characterized by the absence of a change
in the values (the initial sequence is a constant one).

We will now prove the result by a mathematical induction on the
number $k$ of changes in the sequence.
\begin{itemize}
\item The base case is obvious, as the 4 conformations with no change are in $SAW_1(n)\cap SAW_2(n)$.
\item Let us suppose that the statement holds for some $k \geqslant 1$. Let 
$c=(c_1,c_2,...,c_n)$ be a conformation having exactly $k$ changes, that is, the cardinality of
the set $D(c) = \left\{i \in \llbracket 1, n-1 \rrbracket \big/ c_{i+1} \neq c_i \right\}$ is $k$.
Let us denote by $p(c)$ the first change in this sequence: $p(c) = min \left\{D(c)\right\}$.
We can apply the folding operation that suppress the difference between $c_{p(c)}$ and
$c_{p(c)+1}$. For instance, if $c_{p(c)+1} = c_{p(c)}-1 ~(mod~4)$, then a clockwise pivot move
on position $c_{p(c)+1}$ will remove this difference. So the conformation
$c'=\left(c_1,c_2,\hdots,c_{p(c)},f\left(c_{p(c)+1}\right),\hdots,f(c_n)\right)$ has $k-1$
changes. By the induction hypothesis, $c'$ can be obtained from $(j,j,j,\hdots,j)$, where 
$j \in \{0,1,2,3\}$ by a succession of clockwise and anticlockwise pivot move. We can conclude
that it is the case for $c$ too.
\end{itemize}
\end{proof}

Indeed the notion of ``pivot moves'' is well-known in
the literature on protein folding. It was already supposed 
that pivot moves provide an ergodic move set, meaning that by 
a sequence of pivot moves one can transform any conformation
into any other conformation, when only requiring that the
ending conformation satisfies the excluded volume requirement.
The contribution of this section is simply a rigorous proof 
of such an assumption.

\subsection{$SAW_2$ is not $SAW_3$}
\label{Saw2pasSaw3}

To determine whether $SAW_2$ is equal to $SAW_3$, we have firstly followed a computational 
approach, by using the Python language. A first function (the function \emph{conformations} 
of Listing~\ref{AllConformations} in the appendix)
has been written to return the list of all possible conformations, even if they are not
paths without crossing. In other words, this function produces 
all sequences of compass directions of length $n$ (thus $conformations(n) = \mathds{Z}/4\mathds{Z}^n$).
Then a generator \emph{saw1\_conformations(n)} has been constructed, making
it possible to obtain all the $SAW_1$ conformations (see Listing~\ref{SAW1Conformations}). 
It is based on the fact that such a conformation
of length $n$ must have a support of cardinality equal to $n$.

Finally, a program (Algorithm~\ref{SAW3Conformations}) has been written to check experimentally 
whether an element of $SAW_1=SAW_2$ is in $SAW3$. This is a systematic approach: for
each residue of the candidate conformation, we try to make a clockwise pivot move and an
anticlockwise one. If the obtained conformation is a path without crossing then 
the candidate is rejected. On the contrary, if it is never possible to unfold the
protein, whatever the considered residue, then the candidate is in $SAW_2$ without being
in $SAW_3$.

Figure~\ref{undefoldable} gives four examples of conformations that are
in $SAW_2$ without being in $SAW_3$ (the unique ones authors have found
via the programs given in the appendix). These counterexamples prove that,

\begin{proposition}
$\exists n \in \mathds{N}^*, SAW_2(n) \neq SAW_3(n).$
\end{proposition}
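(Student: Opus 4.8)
The plan is to prove this existential statement by exhibiting, for a suitable $n$, an explicit conformation $c$ that lies in $SAW_2(n)$ but not in $SAW_3(n)$; the four conformations of Figure~\ref{undefoldable} are the intended witnesses. Since the inclusion $SAW_3(n)\subseteq SAW_2(n)$ always holds (Proposition~\ref{subsets}), and since $SAW_1(n)=SAW_2(n)$ has just been established, it suffices to check two things for a single candidate $c$: first, that $c$ is a path without crossing, hence $c\in SAW_1(n)=SAW_2(n)$; and second, that $c\notin SAW_3(n)$. The first point is a direct computation: one applies $p$ to $c$ and verifies that the cardinality of $support(p(c))$ equals $n+1$, exactly as in Definition~\ref{def:SAW}.

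The core of the argument is the second point, and the key idea is the \emph{reversibility} of pivot moves. Because $f_k^{-1}=f_{-k}$, reversing a folding sequence turns it into an unfolding sequence visiting the same conformations in reverse order. Consequently $c\in SAW_3(n)$ if and only if $c$ can be \emph{unfolded} back to the straight conformation $(0,0,\hdots,0)$ through a succession of pivot moves whose every intermediate conformation is a path without crossing. In graph-theoretic terms, let $\Gamma$ be the graph whose vertices are the elements of $SAW_1(n)$ and whose edges join two conformations that differ by a single pivot move $f_k^{\pm 1}$; then $c\in SAW_3(n)$ exactly when $c$ lies in the connected component of $\Gamma$ containing $(0,0,\hdots,0)$. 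To prove $c\notin SAW_3(n)$ it therefore suffices to show that this component does \emph{not} contain $c$.

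The way I would establish this is a local rigidity (``locked conformation'') check. Observe that a pivot at position $1$, i.e. $f_1^{\pm 1}$, is a global rigid rotation of the whole shape, so it always preserves the path-without-crossing property but never changes the underlying geometric shape; thus the only edges of $\Gamma$ incident to $c$ that could bring us closer to the straight line are pivots at a position $k\geqslant 2$. The plan is to verify, by a finite computation over the $2(n-1)$ candidates, that for \emph{every} position $k\in\llbracket 2,n\rrbracket$ and \emph{both} directions, the conformation $f_k^{\pm 1}(c)$ fails the cardinality test of Definition~\ref{def:SAW}, i.e. develops a crossing. Since a global rotation carries a crossing to a crossing, the same rigidity then holds simultaneously for the three rotated copies $f_1^{j}(c)$ of $c$. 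Hence the connected component of $c$ in $\Gamma$ reduces to the at most four rotations $\{c,f_1(c),f_1^2(c),f_1^3(c)\}$, none of which is the straight conformation when $c$ is bent. Therefore $c$ is unreachable from $(0,0,\hdots,0)$, so $c\notin SAW_3(n)$, which yields $SAW_2(n)\neq SAW_3(n)$ for this value of $n$.

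I expect the main obstacle to be the correct handling of this rotation phenomenon: one cannot simply say ``$c$ has no self-avoiding neighbour,'' because the global rotations $f_1^{\pm 1}(c)$ are always self-avoiding neighbours. The delicate point is to argue that \emph{local} rigidity with respect to the shape-changing pivots ($k\geqslant 2$) already forces \emph{global} unreachability: any hypothetical unfolding of $c$ would have to take a first shape-changing step, which is precisely what the finite check rules out, while the intervening rotations never alter the shape class and hence never make progress toward the straight line. Making this reachability argument watertight, and then carrying out the exhaustive but routine crossing check on the explicit witness (the role played by Algorithm~\ref{SAW3Conformations} in the appendix), are the two components that complete the proof.
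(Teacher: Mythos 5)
Your proposal is correct and follows essentially the same route as the paper: the paper's proof consists precisely of exhibiting the conformations of Figure~\ref{undefoldable}, found by an exhaustive check (Algorithm~\ref{SAW3Conformations}) that no pivot move at any position $k \geqslant 2$, in either direction, yields a path without crossing, and concluding that such ``locked'' conformations lie in $SAW_2 \setminus SAW_3$. If anything, your handling of the global-rotation pivots $f_1^{\pm 1}$ is more careful than the paper's, whose prose claims the program detects ``connected components reduced to one vertex'' even though every vertex of $\mathfrak{G}(n)$ is adjacent to its rigid rotations; your observation that local rigidity for $k \geqslant 2$ confines the component of $c$ to its at most four rotations, none of which is the straight conformation, supplies exactly the justification the paper leaves implicit.
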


\begin{figure}[h!]
 \centering
 \subfigure[175 nodes]{\includegraphics[width=0.24\textwidth]
 {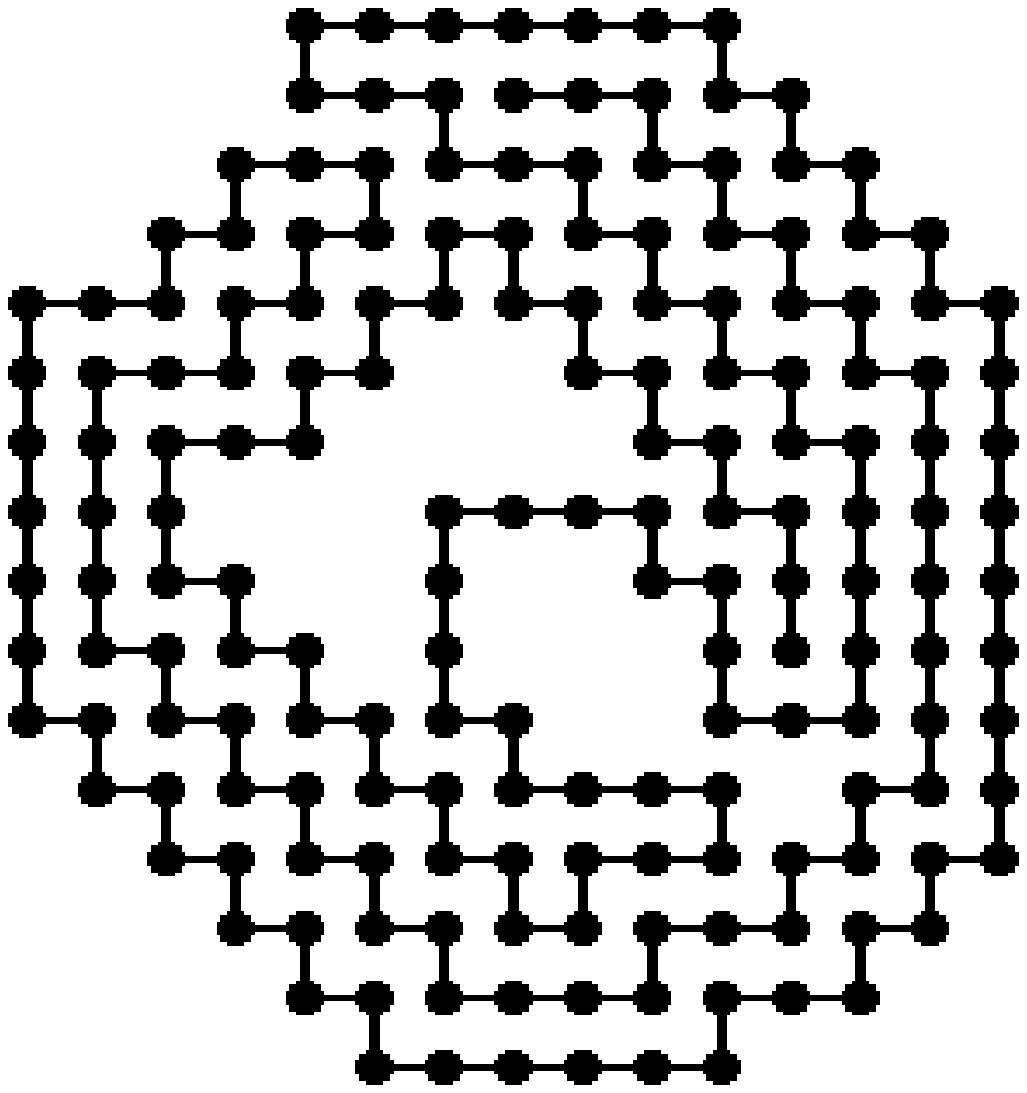}\label{cs175}}\hspace{2cm}
 \subfigure[159 nodes]{\includegraphics[width=0.24\textwidth]
 {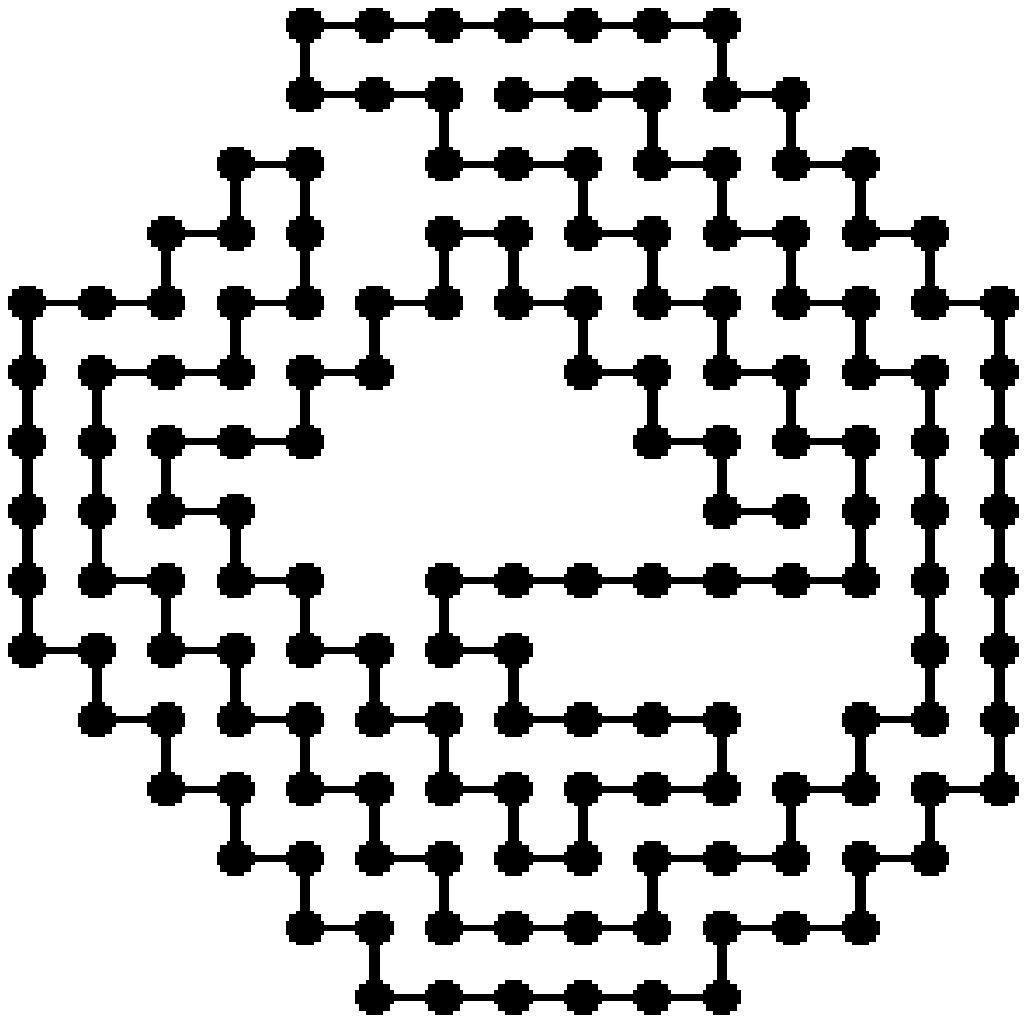}\label{cs159}}\\
 \subfigure[169 nodes]{\includegraphics[width=0.24\textwidth]
 {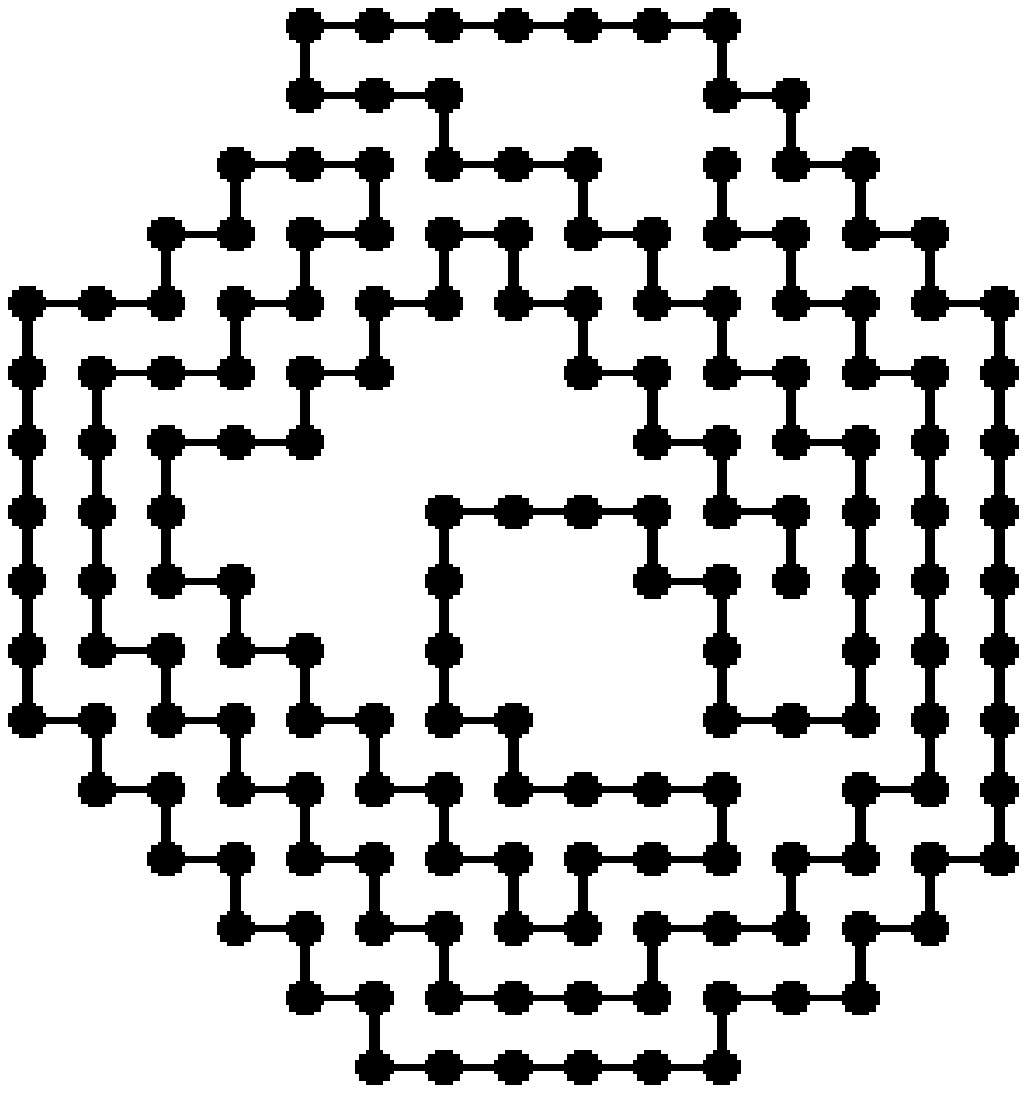}\label{cs169}}\hspace{2cm}
 \subfigure[914 nodes]{
\includegraphics[width=0.5\textwidth]{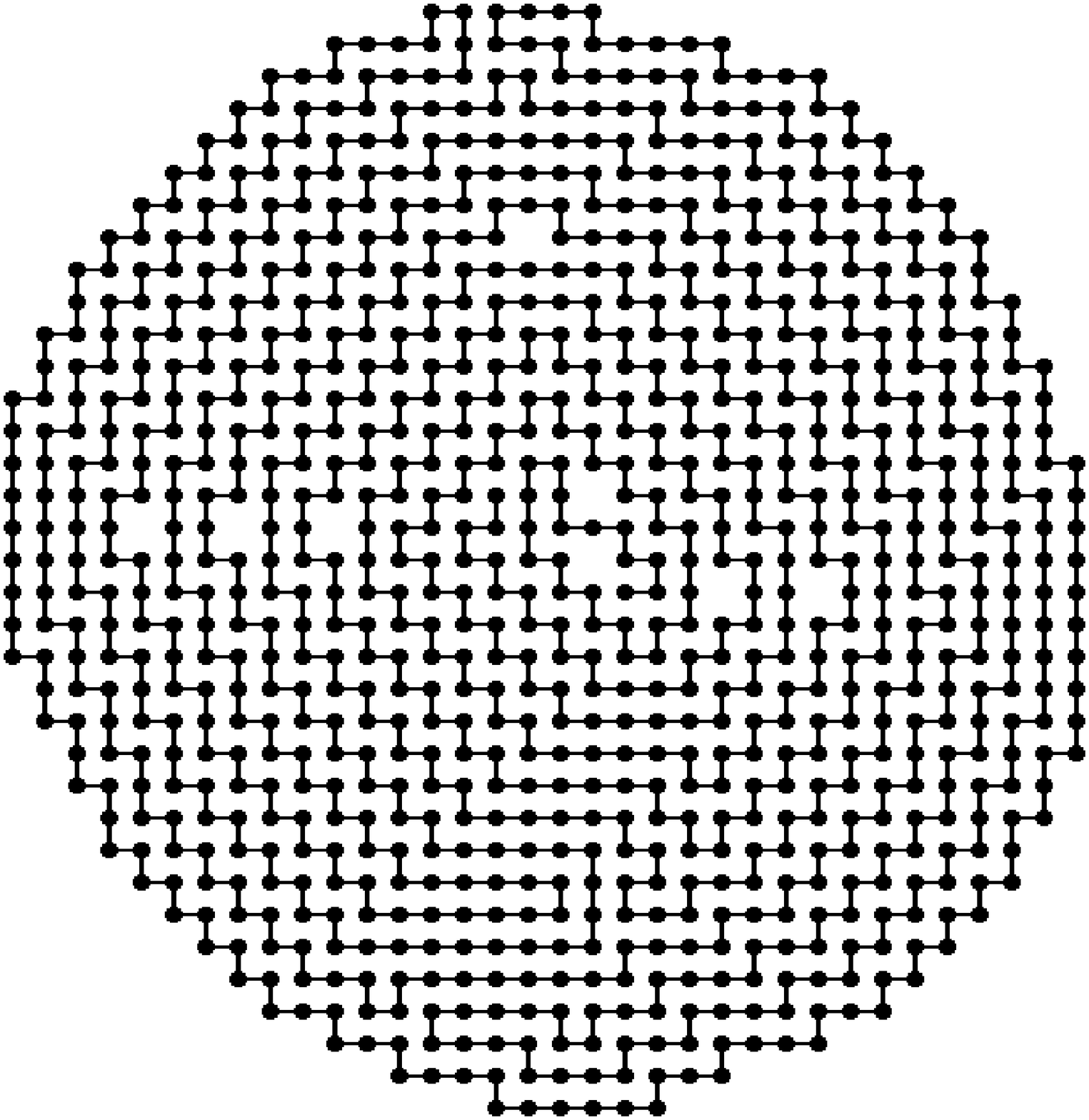}\label{grosCS}}
\caption{Examples of conformations in $SAW_2$ without being in $SAW_3$}
\label{undefoldable}
\end{figure}

\subsubsection{Consequences of the strict inclusion}

Proposition~\ref{subsets} can be rewritten as follows,

\begin{proposition}
$SAW_4 \subsetneq SAW_3 \subsetneq SAW_2 = SAW_1$. 
\end{proposition}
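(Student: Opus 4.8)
The plan is to assemble the three results already established in this section; the only delicate point is the passage from the pointwise (per-length) statements to the unions $SAW_i=\bigcup_{n\in\mathds{N}^*}SAW_i(n)$.

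First I would record the equality $SAW_1=SAW_2$. Since we have shown $SAW_1(n)=SAW_2(n)$ for every $n\in\mathds{N}$, taking the union over $n$ immediately yields $SAW_1=\bigcup_n SAW_1(n)=\bigcup_n SAW_2(n)=SAW_2$. This part is purely formal once the per-length identity is in hand.

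Next, for the strict inclusion $SAW_3\subsetneq SAW_2$, the inclusion $SAW_3\subseteq SAW_2$ is part of Proposition~\ref{subsets}. For strictness I would invoke the proposition asserting that there exists $n\in\mathds{N}^*$ with $SAW_2(n)\neq SAW_3(n)$. Since the per-length inclusion $SAW_3(n)\subseteq SAW_2(n)$ also holds (it follows directly from the definitions of the two requirements), this inequality is in fact a strict inclusion $SAW_3(n)\subsetneq SAW_2(n)$, so there exists a conformation $c\in SAW_2(n)\setminus SAW_3(n)$. The key observation is that conformations of distinct lengths live in disjoint spaces: each $SAW_i(m)$ is contained in $\mathds{Z}/4\mathds{Z}^m$, and a length-$n$ word cannot simultaneously be a length-$m$ word for $m\neq n$. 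Hence $c\in SAW_3=\bigcup_m SAW_3(m)$ would force $c\in SAW_3(n)$, which is false; therefore $c\in SAW_2\setminus SAW_3$, giving $SAW_3\subsetneq SAW_2$.

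Finally, the strict inclusion $SAW_4\subsetneq SAW_3$ is exactly the content of Proposition~\ref{subsets}, whose strictness was witnessed by the folding process of Figure~\ref{saw4}. Chaining the three facts produces $SAW_4\subsetneq SAW_3\subsetneq SAW_2=SAW_1$, as claimed. The only genuine obstacle is the bookkeeping just described: one must check that membership in the union $SAW_i$ reduces to membership in the single slice $SAW_i(n)$ selected by the length of the conformation, so that a per-length counterexample truly separates the unions. Everything else is a direct transcription of the previously proved statements.
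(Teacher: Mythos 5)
Your proposal is correct and follows essentially the same route as the paper, which states this proposition without further argument as a direct combination of Proposition~\ref{subsets}, the equality $SAW_1(n)=SAW_2(n)$, and the counterexamples showing $SAW_2(n)\neq SAW_3(n)$. The extra bookkeeping you supply about length-indexed slices living in disjoint ambient spaces is a valid (and slightly more careful) justification of a step the paper leaves implicit.
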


As stated previously, the NP-completeness holds for $SAW_1$. However $SAW_1$
is a strictly larger set than $SAW_3$. $SAW_3$ is a set frequently used for protein
structure prediction. As $SAW_3$ is strictly smaller than $SAW_1$, it is not
sure that the considered problem still remains a NP complete one. Incidentally,
it is not clear that only prediction is possible. Indeed, proteins have ``only'' 
tens to thousands amino acids. If $SAW_3$ is very small 
compared to $SAW_1$, then perhaps exact methods as SAT solvers can be more widely
considered ?

Moreover, $SAW_3$ is strictly larger than $SAW_4$, which is a  
2D model slightly closer than true real protein folding. This strict inclusion reinforces the
fact that the NP-completeness statement must be regarded another time, to 
determine if this prediction problem is indeed a NP-complete one or not. 
Furthermore, prediction tools could reduce the set of all possibilities by
taking place into $SAW_4$ instead of $SAW_3$, thus
improving the confidence put in the returned conformations.

All of these questionings are strongly linked to the size ratio between each
$SAW_i$: the probability the NP-completeness proof remains valid in $SAW_3$ or
$SAW_4$ decreases when these ratios increase. 
This is why we will investigate more deeply, in the next section, the relation 
between $SAW_2$ and $SAW_3$

\section{A Graph Approach of the $SAW_i$ Ratios Problem}
\label{sec:graph}

Let us denote by $\mathfrak{G}_0(n)$ the directed graph having $4^n$ vertices, such that:
\begin{itemize}
\item these vertices are elements of $\llbracket 0,3 \rrbracket^n$,
\item there is a directed edge from the vertex $(e_1, \hdots, e_n)$ to the vertex $(f_1,\hdots,f_n)$ if and only if $\exists k \in \llbracket 1, n \rrbracket$ and $\exists i \in \{-1,1\}$ such that $(f_1,\hdots,f_n)$ is equal to:
\begin{itemize}
\item either $\left(e_1, \hdots, e_k, e_k+1~(\textrm{mod}~4), \dots, e_n+1~(\textrm{mod}~4)\right)$
\item or $\left(e_1, \hdots, e_k, e_k-1~(\textrm{mod}~4), \dots, e_n-1~(\textrm{mod}~4)\right)$.
\end{itemize}
\end{itemize}

\begin{figure}
\centering
\scalebox{1} 
{
\begin{pspicture}(0,-3.3819609)(6.82,3.3819609)
\usefont{T1}{ptm}{m}{n}
\rput(0.9540625,2.343039){21}
\usefont{T1}{ptm}{m}{n}
\rput(2.5685937,2.343039){22}
\psbezier[linewidth=0.04,arrowsize=0.05291667cm 2.0,arrowlength=1.4,arrowinset=0.4]{->}(1.2,2.438039)(1.8,2.638039)(1.8,2.638039)(2.4,2.438039)
\usefont{T1}{ptm}{m}{n}
\rput(4.1610937,2.343039){23}
\psbezier[linewidth=0.04,arrowsize=0.05291667cm 2.0,arrowlength=1.4,arrowinset=0.4]{->}(2.8,2.438039)(3.4,2.638039)(3.4,2.638039)(4.0,2.438039)
\usefont{T1}{ptm}{m}{n}
\rput(5.7696877,2.343039){20}
\psbezier[linewidth=0.04,arrowsize=0.05291667cm 2.0,arrowlength=1.4,arrowinset=0.4]{->}(4.4,2.438039)(5.0,2.638039)(5.0,2.638039)(5.6,2.438039)
\psbezier[linewidth=0.04,linestyle=dotted,dotsep=0.16cm,arrowsize=0.05291667cm 2.0,arrowlength=1.4,arrowinset=0.4]{->}(0.0,2.438039)(0.6,2.638039)(0.2,2.638039)(0.8,2.438039)
\psbezier[linewidth=0.04,arrowsize=0.05291667cm 2.0,arrowlength=1.4,arrowinset=0.4]{->}(2.4,2.238039)(1.8,2.038039)(1.8,2.038039)(1.2,2.238039)
\psbezier[linewidth=0.04,arrowsize=0.05291667cm 2.0,arrowlength=1.4,arrowinset=0.4]{->}(4.0,2.238039)(3.4,2.038039)(3.4,2.038039)(2.8,2.238039)
\psbezier[linewidth=0.04,arrowsize=0.05291667cm 2.0,arrowlength=1.4,arrowinset=0.4]{->}(5.6,2.238039)(5.0,2.038039)(5.0,2.038039)(4.4,2.238039)
\psbezier[linewidth=0.04,linestyle=dotted,dotsep=0.16cm,arrowsize=0.05291667cm 2.0,arrowlength=1.4,arrowinset=0.4]{->}(0.8,2.238039)(0.4,2.238039)(0.4,2.038039)(0.0,2.238039)
\psbezier[linewidth=0.04,linestyle=dotted,dotsep=0.16cm,arrowsize=0.05291667cm 2.0,arrowlength=1.4,arrowinset=0.4]{->}(6.0,2.438039)(6.6,2.638039)(6.2,2.638039)(6.8,2.438039)
\psbezier[linewidth=0.04,linestyle=dotted,dotsep=0.16cm,arrowsize=0.05291667cm 2.0,arrowlength=1.4,arrowinset=0.4]{->}(6.8,2.238039)(6.4,2.238039)(6.4,2.038039)(6.0,2.238039)
\usefont{T1}{ptm}{m}{n}
\rput(0.9525,0.7430391){10}
\usefont{T1}{ptm}{m}{n}
\rput(2.536875,0.7430391){11}
\psbezier[linewidth=0.04,arrowsize=0.05291667cm 2.0,arrowlength=1.4,arrowinset=0.4]{->}(1.2,0.83803904)(1.8,1.0380391)(1.8,1.0380391)(2.4,0.83803904)
\usefont{T1}{ptm}{m}{n}
\rput(4.1514063,0.7430391){12}
\psbezier[linewidth=0.04,arrowsize=0.05291667cm 2.0,arrowlength=1.4,arrowinset=0.4]{->}(2.8,0.83803904)(3.4,1.0380391)(3.4,1.0380391)(4.0,0.83803904)
\usefont{T1}{ptm}{m}{n}
\rput(5.743906,0.7430391){13}
\psbezier[linewidth=0.04,arrowsize=0.05291667cm 2.0,arrowlength=1.4,arrowinset=0.4]{->}(4.4,0.83803904)(5.0,1.0380391)(5.0,1.0380391)(5.6,0.83803904)
\psbezier[linewidth=0.04,linestyle=dotted,dotsep=0.16cm,arrowsize=0.05291667cm 2.0,arrowlength=1.4,arrowinset=0.4]{->}(0.0,0.83803904)(0.6,1.0380391)(0.2,1.0380391)(0.8,0.83803904)
\psbezier[linewidth=0.04,arrowsize=0.05291667cm 2.0,arrowlength=1.4,arrowinset=0.4]{->}(2.4,0.63803905)(1.8,0.43803906)(1.8,0.43803906)(1.2,0.63803905)
\psbezier[linewidth=0.04,arrowsize=0.05291667cm 2.0,arrowlength=1.4,arrowinset=0.4]{->}(4.0,0.63803905)(3.4,0.43803906)(3.4,0.43803906)(2.8,0.63803905)
\psbezier[linewidth=0.04,arrowsize=0.05291667cm 2.0,arrowlength=1.4,arrowinset=0.4]{->}(5.6,0.63803905)(5.0,0.43803906)(5.0,0.43803906)(4.4,0.63803905)
\psbezier[linewidth=0.04,linestyle=dotted,dotsep=0.16cm,arrowsize=0.05291667cm 2.0,arrowlength=1.4,arrowinset=0.4]{->}(0.8,0.63803905)(0.4,0.63803905)(0.4,0.43803906)(0.0,0.63803905)
\psbezier[linewidth=0.04,linestyle=dotted,dotsep=0.16cm,arrowsize=0.05291667cm 2.0,arrowlength=1.4,arrowinset=0.4]{->}(6.0,0.83803904)(6.6,1.0380391)(6.2,1.0380391)(6.8,0.83803904)
\psbezier[linewidth=0.04,linestyle=dotted,dotsep=0.16cm,arrowsize=0.05291667cm 2.0,arrowlength=1.4,arrowinset=0.4]{->}(6.8,0.63803905)(6.4,0.63803905)(6.4,0.43803906)(6.0,0.63803905)
\psbezier[linewidth=0.04,arrowsize=0.05291667cm 2.0,arrowlength=1.4,arrowinset=0.4]{->}(0.8128585,0.9631987)(0.64030945,1.5716614)(0.64030945,1.5716614)(0.8673477,2.1619608)
\psbezier[linewidth=0.04,arrowsize=0.05291667cm 2.0,arrowlength=1.4,arrowinset=0.4]{->}(1.0671414,2.1528795)(1.2396905,1.5444168)(1.2396905,1.5444168)(1.0126523,0.9541172)
\psbezier[linewidth=0.04,arrowsize=0.05291667cm 2.0,arrowlength=1.4,arrowinset=0.4]{->}(2.4128585,0.9631987)(2.2403095,1.5716614)(2.2403095,1.5716614)(2.4673479,2.1619608)
\psbezier[linewidth=0.04,arrowsize=0.05291667cm 2.0,arrowlength=1.4,arrowinset=0.4]{->}(2.6671414,2.1528795)(2.8396904,1.5444168)(2.8396904,1.5444168)(2.6126523,0.9541172)
\psbezier[linewidth=0.04,arrowsize=0.05291667cm 2.0,arrowlength=1.4,arrowinset=0.4]{->}(4.0128584,0.9631987)(3.8403094,1.5716614)(3.8403094,1.5716614)(4.0673475,2.1619608)
\psbezier[linewidth=0.04,arrowsize=0.05291667cm 2.0,arrowlength=1.4,arrowinset=0.4]{->}(4.2671413,2.1528795)(4.4396906,1.5444168)(4.4396906,1.5444168)(4.212652,0.9541172)
\psbezier[linewidth=0.04,arrowsize=0.05291667cm 2.0,arrowlength=1.4,arrowinset=0.4]{->}(5.612859,0.9631987)(5.4403095,1.5716614)(5.4403095,1.5716614)(5.667348,2.1619608)
\psbezier[linewidth=0.04,arrowsize=0.05291667cm 2.0,arrowlength=1.4,arrowinset=0.4]{->}(5.8671412,2.1528795)(6.0396905,1.5444168)(6.0396905,1.5444168)(5.812652,0.9541172)
\psbezier[linewidth=0.04,arrowsize=0.05291667cm 2.0,arrowlength=1.4,arrowinset=0.4]{->}(0.8128585,-0.6368013)(0.64030945,-0.028338647)(0.64030945,-0.028338647)(0.8673477,0.56196094)
\psbezier[linewidth=0.04,arrowsize=0.05291667cm 2.0,arrowlength=1.4,arrowinset=0.4]{->}(1.0671414,0.5528794)(1.2396905,-0.055583242)(1.2396905,-0.055583242)(1.0126523,-0.64588284)
\psbezier[linewidth=0.04,arrowsize=0.05291667cm 2.0,arrowlength=1.4,arrowinset=0.4]{->}(2.4128585,-0.6368013)(2.2403095,-0.028338647)(2.2403095,-0.028338647)(2.4673479,0.56196094)
\psbezier[linewidth=0.04,arrowsize=0.05291667cm 2.0,arrowlength=1.4,arrowinset=0.4]{->}(2.6671414,0.5528794)(2.8396904,-0.055583242)(2.8396904,-0.055583242)(2.6126523,-0.64588284)
\psbezier[linewidth=0.04,arrowsize=0.05291667cm 2.0,arrowlength=1.4,arrowinset=0.4]{->}(4.0128584,-0.6368013)(3.8403094,-0.028338647)(3.8403094,-0.028338647)(4.0673475,0.56196094)
\psbezier[linewidth=0.04,arrowsize=0.05291667cm 2.0,arrowlength=1.4,arrowinset=0.4]{->}(4.2671413,0.5528794)(4.4396906,-0.055583242)(4.4396906,-0.055583242)(4.212652,-0.64588284)
\psbezier[linewidth=0.04,arrowsize=0.05291667cm 2.0,arrowlength=1.4,arrowinset=0.4]{->}(5.612859,-0.6368013)(5.4403095,-0.028338647)(5.4403095,-0.028338647)(5.667348,0.56196094)
\psbezier[linewidth=0.04,arrowsize=0.05291667cm 2.0,arrowlength=1.4,arrowinset=0.4]{->}(5.8671412,0.5528794)(6.0396905,-0.055583242)(6.0396905,-0.055583242)(5.812652,-0.64588284)
\usefont{T1}{ptm}{m}{n}
\rput(0.9584375,-0.85696095){03}
\usefont{T1}{ptm}{m}{n}
\rput(2.5670311,-0.85696095){00}
\psbezier[linewidth=0.04,arrowsize=0.05291667cm 2.0,arrowlength=1.4,arrowinset=0.4]{->}(1.2,-0.7619609)(1.8,-0.56196094)(1.8,-0.56196094)(2.4,-0.7619609)
\usefont{T1}{ptm}{m}{n}
\rput(4.1514063,-0.85696095){01}
\psbezier[linewidth=0.04,arrowsize=0.05291667cm 2.0,arrowlength=1.4,arrowinset=0.4]{->}(2.8,-0.7619609)(3.4,-0.56196094)(3.4,-0.56196094)(4.0,-0.7619609)
\usefont{T1}{ptm}{m}{n}
\rput(5.7659373,-0.85696095){02}
\psbezier[linewidth=0.04,arrowsize=0.05291667cm 2.0,arrowlength=1.4,arrowinset=0.4]{->}(4.4,-0.7619609)(5.0,-0.56196094)(5.0,-0.56196094)(5.6,-0.7619609)
\psbezier[linewidth=0.04,linestyle=dotted,dotsep=0.16cm,arrowsize=0.05291667cm 2.0,arrowlength=1.4,arrowinset=0.4]{->}(0.0,-0.7619609)(0.6,-0.56196094)(0.2,-0.56196094)(0.8,-0.7619609)
\psbezier[linewidth=0.04,arrowsize=0.05291667cm 2.0,arrowlength=1.4,arrowinset=0.4]{->}(2.4,-0.961961)(1.8,-1.161961)(1.8,-1.161961)(1.2,-0.961961)
\psbezier[linewidth=0.04,arrowsize=0.05291667cm 2.0,arrowlength=1.4,arrowinset=0.4]{->}(4.0,-0.961961)(3.4,-1.161961)(3.4,-1.161961)(2.8,-0.961961)
\psbezier[linewidth=0.04,arrowsize=0.05291667cm 2.0,arrowlength=1.4,arrowinset=0.4]{->}(5.6,-0.961961)(5.0,-1.161961)(5.0,-1.161961)(4.4,-0.961961)
\psbezier[linewidth=0.04,linestyle=dotted,dotsep=0.16cm,arrowsize=0.05291667cm 2.0,arrowlength=1.4,arrowinset=0.4]{->}(0.8,-0.961961)(0.4,-0.961961)(0.4,-1.161961)(0.0,-0.961961)
\psbezier[linewidth=0.04,linestyle=dotted,dotsep=0.16cm,arrowsize=0.05291667cm 2.0,arrowlength=1.4,arrowinset=0.4]{->}(6.0,-0.7619609)(6.6,-0.56196094)(6.2,-0.56196094)(6.8,-0.7619609)
\psbezier[linewidth=0.04,linestyle=dotted,dotsep=0.16cm,arrowsize=0.05291667cm 2.0,arrowlength=1.4,arrowinset=0.4]{->}(6.8,-0.961961)(6.4,-0.961961)(6.4,-1.161961)(6.0,-0.961961)
\usefont{T1}{ptm}{m}{n}
\rput(0.96515626,-2.456961){32}
\usefont{T1}{ptm}{m}{n}
\rput(2.5576563,-2.456961){33}
\psbezier[linewidth=0.04,arrowsize=0.05291667cm 2.0,arrowlength=1.4,arrowinset=0.4]{->}(1.2,-2.361961)(1.8,-2.1619608)(1.8,-2.1619608)(2.4,-2.361961)
\usefont{T1}{ptm}{m}{n}
\rput(4.16625,-2.456961){30}
\psbezier[linewidth=0.04,arrowsize=0.05291667cm 2.0,arrowlength=1.4,arrowinset=0.4]{->}(2.8,-2.361961)(3.4,-2.1619608)(3.4,-2.1619608)(4.0,-2.361961)
\usefont{T1}{ptm}{m}{n}
\rput(5.750625,-2.456961){31}
\psbezier[linewidth=0.04,arrowsize=0.05291667cm 2.0,arrowlength=1.4,arrowinset=0.4]{->}(4.4,-2.361961)(5.0,-2.1619608)(5.0,-2.1619608)(5.6,-2.361961)
\psbezier[linewidth=0.04,linestyle=dotted,dotsep=0.16cm,arrowsize=0.05291667cm 2.0,arrowlength=1.4,arrowinset=0.4]{->}(0.0,-2.361961)(0.6,-2.1619608)(0.2,-2.1619608)(0.8,-2.361961)
\psbezier[linewidth=0.04,arrowsize=0.05291667cm 2.0,arrowlength=1.4,arrowinset=0.4]{->}(2.4,-2.561961)(1.8,-2.761961)(1.8,-2.761961)(1.2,-2.561961)
\psbezier[linewidth=0.04,arrowsize=0.05291667cm 2.0,arrowlength=1.4,arrowinset=0.4]{->}(4.0,-2.561961)(3.4,-2.761961)(3.4,-2.761961)(2.8,-2.561961)
\psbezier[linewidth=0.04,arrowsize=0.05291667cm 2.0,arrowlength=1.4,arrowinset=0.4]{->}(5.6,-2.561961)(5.0,-2.761961)(5.0,-2.761961)(4.4,-2.561961)
\psbezier[linewidth=0.04,linestyle=dotted,dotsep=0.16cm,arrowsize=0.05291667cm 2.0,arrowlength=1.4,arrowinset=0.4]{->}(0.8,-2.561961)(0.4,-2.561961)(0.4,-2.761961)(0.0,-2.561961)
\psbezier[linewidth=0.04,linestyle=dotted,dotsep=0.16cm,arrowsize=0.05291667cm 2.0,arrowlength=1.4,arrowinset=0.4]{->}(6.0,-2.361961)(6.6,-2.1619608)(6.2,-2.1619608)(6.8,-2.361961)
\psbezier[linewidth=0.04,linestyle=dotted,dotsep=0.16cm,arrowsize=0.05291667cm 2.0,arrowlength=1.4,arrowinset=0.4]{->}(6.8,-2.561961)(6.4,-2.561961)(6.4,-2.761961)(6.0,-2.561961)
\psbezier[linewidth=0.04,arrowsize=0.05291667cm 2.0,arrowlength=1.4,arrowinset=0.4]{->}(0.8128585,-2.2368014)(0.64030945,-1.6283387)(0.64030945,-1.6283387)(0.8673477,-1.0380391)
\psbezier[linewidth=0.04,arrowsize=0.05291667cm 2.0,arrowlength=1.4,arrowinset=0.4]{->}(1.0671414,-1.0471206)(1.2396905,-1.6555833)(1.2396905,-1.6555833)(1.0126523,-2.2458827)
\psbezier[linewidth=0.04,arrowsize=0.05291667cm 2.0,arrowlength=1.4,arrowinset=0.4]{->}(2.4128585,-2.2368014)(2.2403095,-1.6283387)(2.2403095,-1.6283387)(2.4673479,-1.0380391)
\psbezier[linewidth=0.04,arrowsize=0.05291667cm 2.0,arrowlength=1.4,arrowinset=0.4]{->}(2.6671414,-1.0471206)(2.8396904,-1.6555833)(2.8396904,-1.6555833)(2.6126523,-2.2458827)
\psbezier[linewidth=0.04,arrowsize=0.05291667cm 2.0,arrowlength=1.4,arrowinset=0.4]{->}(4.0128584,-2.2368014)(3.8403094,-1.6283387)(3.8403094,-1.6283387)(4.0673475,-1.0380391)
\psbezier[linewidth=0.04,arrowsize=0.05291667cm 2.0,arrowlength=1.4,arrowinset=0.4]{->}(4.2671413,-1.0471206)(4.4396906,-1.6555833)(4.4396906,-1.6555833)(4.212652,-2.2458827)
\psbezier[linewidth=0.04,arrowsize=0.05291667cm 2.0,arrowlength=1.4,arrowinset=0.4]{->}(5.612859,-2.2368014)(5.4403095,-1.6283387)(5.4403095,-1.6283387)(5.667348,-1.0380391)
\psbezier[linewidth=0.04,arrowsize=0.05291667cm 2.0,arrowlength=1.4,arrowinset=0.4]{->}(5.8671412,-1.0471206)(6.0396905,-1.6555833)(6.0396905,-1.6555833)(5.812652,-2.2458827)
\psbezier[linewidth=0.04,linestyle=dotted,dotsep=0.16cm,arrowsize=0.05291667cm 2.0,arrowlength=1.4,arrowinset=0.4]{->}(0.8,-3.361961)(0.64030945,-2.8283386)(0.64030945,-3.2283387)(0.8673477,-2.638039)
\psbezier[linewidth=0.04,linestyle=dotted,dotsep=0.16cm,arrowsize=0.05291667cm 2.0,arrowlength=1.4,arrowinset=0.4]{->}(1.0671414,-2.6471205)(1.2396905,-3.0099561)(1.2396905,-3.0099561)(1.0126523,-3.361961)
\psbezier[linewidth=0.04,linestyle=dotted,dotsep=0.16cm,arrowsize=0.05291667cm 2.0,arrowlength=1.4,arrowinset=0.4]{->}(2.4,-3.361961)(2.2403095,-2.8283386)(2.2403095,-3.2283387)(2.4673479,-2.638039)
\psbezier[linewidth=0.04,linestyle=dotted,dotsep=0.16cm,arrowsize=0.05291667cm 2.0,arrowlength=1.4,arrowinset=0.4]{->}(2.6671414,-2.6471205)(2.8396904,-3.0099561)(2.8396904,-3.0099561)(2.6126523,-3.361961)
\psbezier[linewidth=0.04,linestyle=dotted,dotsep=0.16cm,arrowsize=0.05291667cm 2.0,arrowlength=1.4,arrowinset=0.4]{->}(4.0,-3.361961)(3.8403094,-2.8283386)(3.8403094,-3.2283387)(4.0673475,-2.638039)
\psbezier[linewidth=0.04,linestyle=dotted,dotsep=0.16cm,arrowsize=0.05291667cm 2.0,arrowlength=1.4,arrowinset=0.4]{->}(4.2671413,-2.6471205)(4.4396906,-3.0099561)(4.4396906,-3.0099561)(4.212652,-3.361961)
\psbezier[linewidth=0.04,linestyle=dotted,dotsep=0.16cm,arrowsize=0.05291667cm 2.0,arrowlength=1.4,arrowinset=0.4]{->}(5.6,-3.361961)(5.4403095,-2.8283386)(5.4403095,-3.2283387)(5.667348,-2.638039)
\psbezier[linewidth=0.04,linestyle=dotted,dotsep=0.16cm,arrowsize=0.05291667cm 2.0,arrowlength=1.4,arrowinset=0.4]{->}(5.8671412,-2.6471205)(6.0396905,-3.0099561)(6.0396905,-3.0099561)(5.812652,-3.361961)
\psbezier[linewidth=0.04,linestyle=dotted,dotsep=0.16cm,arrowsize=0.05291667cm 2.0,arrowlength=1.4,arrowinset=0.4]{->}(0.8,2.638039)(0.64030945,3.1716614)(0.64030945,2.7716613)(0.8673477,3.361961)
\psbezier[linewidth=0.04,linestyle=dotted,dotsep=0.16cm,arrowsize=0.05291667cm 2.0,arrowlength=1.4,arrowinset=0.4]{->}(1.0671414,3.3528795)(1.2396905,2.9900439)(1.2396905,2.9900439)(1.0126523,2.638039)
\psbezier[linewidth=0.04,linestyle=dotted,dotsep=0.16cm,arrowsize=0.05291667cm 2.0,arrowlength=1.4,arrowinset=0.4]{->}(2.4,2.638039)(2.2403095,3.1716614)(2.2403095,2.7716613)(2.4673479,3.361961)
\psbezier[linewidth=0.04,linestyle=dotted,dotsep=0.16cm,arrowsize=0.05291667cm 2.0,arrowlength=1.4,arrowinset=0.4]{->}(2.6671414,3.3528795)(2.8396904,2.9900439)(2.8396904,2.9900439)(2.6126523,2.638039)
\psbezier[linewidth=0.04,linestyle=dotted,dotsep=0.16cm,arrowsize=0.05291667cm 2.0,arrowlength=1.4,arrowinset=0.4]{->}(4.0,2.638039)(3.8403094,3.1716614)(3.8403094,2.7716613)(4.0673475,3.361961)
\psbezier[linewidth=0.04,linestyle=dotted,dotsep=0.16cm,arrowsize=0.05291667cm 2.0,arrowlength=1.4,arrowinset=0.4]{->}(4.2671413,3.3528795)(4.4396906,2.9900439)(4.4396906,2.9900439)(4.212652,2.638039)
\psbezier[linewidth=0.04,linestyle=dotted,dotsep=0.16cm,arrowsize=0.05291667cm 2.0,arrowlength=1.4,arrowinset=0.4]{->}(5.6,2.638039)(5.4403095,3.1716614)(5.4403095,2.7716613)(5.667348,3.361961)
\psbezier[linewidth=0.04,linestyle=dotted,dotsep=0.16cm,arrowsize=0.05291667cm 2.0,arrowlength=1.4,arrowinset=0.4]{->}(5.8671412,3.3528795)(6.0396905,2.9900439)(6.0396905,2.9900439)(5.812652,2.638039)
\end{pspicture} 
}
\caption{The digraph $\mathfrak{G}_0(2)$}
\label{Go2}
\end{figure}
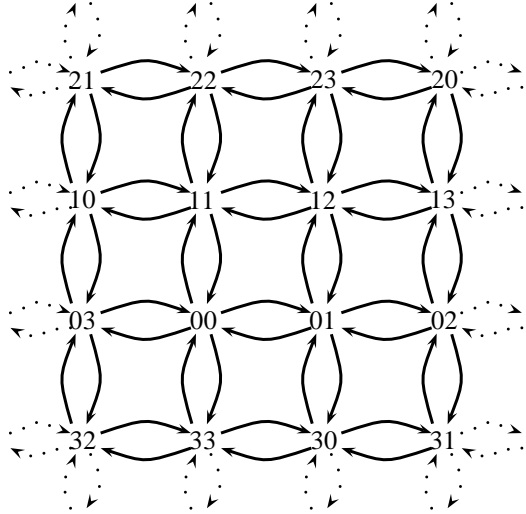

Obviously, in $\mathfrak{G}_0(n)$, if there is a directed edge from the vertex $i$ to the vertex $j$, then there is another edge from $j$ to $i$ too.
Such a graph is depicted in Fig.~\ref{Go2}, in which some edges are dotted to represent 
the fact that this graph is as a torus: we can go from the vertex 22 to the vertex 33 for
instance.
The rule of construction of this graph is detailed in Figure~\ref{Go2Rules}. 

\begin{figure}
\centering
\scalebox{0.6} 
{
\begin{pspicture}(0,-2.313125)(10.570625,2.313125)
\usefont{T1}{ptm}{m}{n}
\rput(5.180625,0.03125){\LARGE (i,j)}
\psline[linewidth=0.04cm,arrowsize=0.05291667cm 2.0,arrowlength=1.4,arrowinset=0.4]{->}(5.1153126,0.35625)(5.1153126,1.55625)
\psline[linewidth=0.04cm,arrowsize=0.05291667cm 2.0,arrowlength=1.4,arrowinset=0.4]{->}(5.7153125,-0.04375)(6.9153123,-0.04375)
\psline[linewidth=0.04cm,arrowsize=0.05291667cm 2.0,arrowlength=1.4,arrowinset=0.4]{->}(5.1153126,-0.44375)(5.1153126,-1.64375)
\psline[linewidth=0.04cm,arrowsize=0.05291667cm 2.0,arrowlength=1.4,arrowinset=0.4]{->}(4.7153125,-0.04375)(3.5153124,-0.04375)
\usefont{T1}{ptm}{m}{n}
\rput(5.180625,2.03125){\LARGE ((i+1) mod 4, (j+1) mod 4)}
\usefont{T1}{ptm}{m}{n}
\rput(8.740625,0.03125){\LARGE (i,(j+1) mod 4)}
\usefont{T1}{ptm}{m}{n}
\rput(1.670625,0.03125){\LARGE (i,(j-1) mod 4)}
\usefont{T1}{ptm}{m}{n}
\rput(5.040625,-1.96875){\LARGE ((i-1) mod 4, (j-1) mod 4)}
\end{pspicture} 
}
\caption{Rules of $\mathfrak{G}_0(2)$}
\label{Go2Rules}
\end{figure}
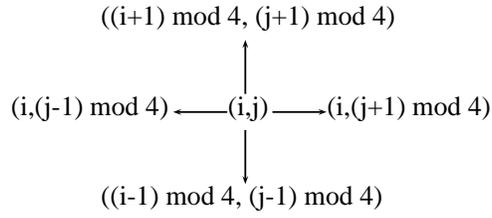

\begin{figure}
\centering
\scalebox{0.5} 
{
\begin{pspicture}(0,-2.613125)(21.850624,2.613125)
\usefont{T1}{ptm}{m}{n}
\rput(11.600625,0.13125){\LARGE (i,j,k)}
\psline[linewidth=0.04cm,arrowsize=0.05291667cm 2.0,arrowlength=1.4,arrowinset=0.4]{->}(11.515312,0.45625)(11.515312,1.85625)
\psline[linewidth=0.04cm,arrowsize=0.05291667cm 2.0,arrowlength=1.4,arrowinset=0.4]{->}(12.515312,0.05625)(13.715313,0.05625)
\psline[linewidth=0.04cm,arrowsize=0.05291667cm 2.0,arrowlength=1.4,arrowinset=0.4]{->}(11.515312,-0.34375)(11.515312,-1.94375)
\psline[linewidth=0.04cm,arrowsize=0.05291667cm 2.0,arrowlength=1.4,arrowinset=0.4]{->}(10.915313,0.05625)(9.715313,0.05625)
\usefont{T1}{ptm}{m}{n}
\rput(11.240625,2.33125){\LARGE ((i+1) mod 4, (j+1) mod 4, (k+1) mod 4)}
\usefont{T1}{ptm}{m}{n}
\rput(15.760625,0.13125){\LARGE (i,j,(k+1) mod 4)}
\usefont{T1}{ptm}{m}{n}
\rput(7.690625,0.13125){\LARGE (i,j,(k-1) mod 4)}
\usefont{T1}{ptm}{m}{n}
\rput(11.430625,-2.26875){\LARGE ((i-1) mod 4, (j-1) mod 4, (k-1) mod 4)}
\psline[linewidth=0.04cm,arrowsize=0.05291667cm 2.0,arrowlength=1.4,arrowinset=0.4]{->}(12.515312,0.25625)(16.915312,1.25625)
\psline[linewidth=0.04cm,arrowsize=0.05291667cm 2.0,arrowlength=1.4,arrowinset=0.4]{->}(10.915313,-0.14375)(6.3153124,-1.14375)
\usefont{T1}{ptm}{m}{n}
\rput(18.280624,1.53125){\LARGE (i, (j+1) mod 4, (k+1) mod 4)}
\usefont{T1}{ptm}{m}{n}
\rput(3.340625,-1.46875){\LARGE (i, (j-1) mod 4, (k-1) mod 4)}
\end{pspicture} 
}
\caption{Rules of $\mathfrak{G}_0(3)$}
\label{Go3Rules}
\end{figure}

\begin{figure}
\centering
\scalebox{1} 
{
\begin{pspicture}(0,-3.3819609)(3.819381,3.3819609)
\usefont{T1}{ptm}{m}{n}
\rput(0.31375307,2.343039){21}
\usefont{T1}{ptm}{m}{n}
\rput(1.9282843,2.343039){22}
\psbezier[linewidth=0.04,arrowsize=0.05291667cm 2.0,arrowlength=1.4,arrowinset=0.4]{->}(0.55969054,2.438039)(1.1596906,2.638039)(1.1596906,2.638039)(1.7596905,2.438039)
\usefont{T1}{ptm}{m}{n}
\rput(3.5207844,2.343039){23}
\psbezier[linewidth=0.04,arrowsize=0.05291667cm 2.0,arrowlength=1.4,arrowinset=0.4]{->}(2.1596906,2.438039)(2.7596905,2.638039)(2.7596905,2.638039)(3.3596907,2.438039)
\psbezier[linewidth=0.04,arrowsize=0.05291667cm 2.0,arrowlength=1.4,arrowinset=0.4]{->}(1.7596905,2.238039)(1.1596906,2.038039)(1.1596906,2.038039)(0.55969054,2.238039)
\psbezier[linewidth=0.04,arrowsize=0.05291667cm 2.0,arrowlength=1.4,arrowinset=0.4]{->}(3.3596907,2.238039)(2.7596905,2.038039)(2.7596905,2.038039)(2.1596906,2.238039)
\usefont{T1}{ptm}{m}{n}
\rput(0.31219056,0.7430391){10}
\usefont{T1}{ptm}{m}{n}
\rput(1.8965656,0.7430391){11}
\psbezier[linewidth=0.04,arrowsize=0.05291667cm 2.0,arrowlength=1.4,arrowinset=0.4]{->}(0.55969054,0.83803904)(1.1596906,1.0380391)(1.1596906,1.0380391)(1.7596905,0.83803904)
\usefont{T1}{ptm}{m}{n}
\rput(3.5110967,0.7430391){12}
\psbezier[linewidth=0.04,arrowsize=0.05291667cm 2.0,arrowlength=1.4,arrowinset=0.4]{->}(2.1596906,0.83803904)(2.7596905,1.0380391)(2.7596905,1.0380391)(3.3596907,0.83803904)
\psbezier[linewidth=0.04,arrowsize=0.05291667cm 2.0,arrowlength=1.4,arrowinset=0.4]{->}(1.7596905,0.63803905)(1.1596906,0.43803906)(1.1596906,0.43803906)(0.55969054,0.63803905)
\psbezier[linewidth=0.04,arrowsize=0.05291667cm 2.0,arrowlength=1.4,arrowinset=0.4]{->}(3.3596907,0.63803905)(2.7596905,0.43803906)(2.7596905,0.43803906)(2.1596906,0.63803905)
\psbezier[linewidth=0.04,arrowsize=0.05291667cm 2.0,arrowlength=1.4,arrowinset=0.4]{->}(0.17254911,0.9631987)(0.0,1.5716614)(0.0,1.5716614)(0.22703831,2.1619608)
\psbezier[linewidth=0.04,arrowsize=0.05291667cm 2.0,arrowlength=1.4,arrowinset=0.4]{->}(0.42683202,2.1528795)(0.59938115,1.5444168)(0.59938115,1.5444168)(0.37234282,0.9541172)
\psbezier[linewidth=0.04,arrowsize=0.05291667cm 2.0,arrowlength=1.4,arrowinset=0.4]{->}(1.7725492,0.9631987)(1.6,1.5716614)(1.6,1.5716614)(1.8270383,2.1619608)
\psbezier[linewidth=0.04,arrowsize=0.05291667cm 2.0,arrowlength=1.4,arrowinset=0.4]{->}(2.026832,2.1528795)(2.199381,1.5444168)(2.199381,1.5444168)(1.9723428,0.9541172)
\psbezier[linewidth=0.04,arrowsize=0.05291667cm 2.0,arrowlength=1.4,arrowinset=0.4]{->}(3.372549,0.9631987)(3.2,1.5716614)(3.2,1.5716614)(3.4270382,2.1619608)
\psbezier[linewidth=0.04,arrowsize=0.05291667cm 2.0,arrowlength=1.4,arrowinset=0.4]{->}(3.626832,2.1528795)(3.799381,1.5444168)(3.799381,1.5444168)(3.5723429,0.9541172)
\psbezier[linewidth=0.04,arrowsize=0.05291667cm 2.0,arrowlength=1.4,arrowinset=0.4]{->}(0.17254911,-0.6368013)(0.0,-0.028338647)(0.0,-0.028338647)(0.22703831,0.56196094)
\psbezier[linewidth=0.04,arrowsize=0.05291667cm 2.0,arrowlength=1.4,arrowinset=0.4]{->}(0.42683202,0.5528794)(0.59938115,-0.055583242)(0.59938115,-0.055583242)(0.37234282,-0.64588284)
\psbezier[linewidth=0.04,arrowsize=0.05291667cm 2.0,arrowlength=1.4,arrowinset=0.4]{->}(1.7725492,-0.6368013)(1.6,-0.028338647)(1.6,-0.028338647)(1.8270383,0.56196094)
\psbezier[linewidth=0.04,arrowsize=0.05291667cm 2.0,arrowlength=1.4,arrowinset=0.4]{->}(2.026832,0.5528794)(2.199381,-0.055583242)(2.199381,-0.055583242)(1.9723428,-0.64588284)
\psbezier[linewidth=0.04,arrowsize=0.05291667cm 2.0,arrowlength=1.4,arrowinset=0.4]{->}(3.372549,-0.6368013)(3.2,-0.028338647)(3.2,-0.028338647)(3.4270382,0.56196094)
\psbezier[linewidth=0.04,arrowsize=0.05291667cm 2.0,arrowlength=1.4,arrowinset=0.4]{->}(3.626832,0.5528794)(3.799381,-0.055583242)(3.799381,-0.055583242)(3.5723429,-0.64588284)
\usefont{T1}{ptm}{m}{n}
\rput(0.31812805,-0.85696095){03}
\usefont{T1}{ptm}{m}{n}
\rput(1.9267218,-0.85696095){00}
\psbezier[linewidth=0.04,arrowsize=0.05291667cm 2.0,arrowlength=1.4,arrowinset=0.4]{->}(0.55969054,-0.7619609)(1.1596906,-0.56196094)(1.1596906,-0.56196094)(1.7596905,-0.7619609)
\usefont{T1}{ptm}{m}{n}
\rput(3.5110967,-0.85696095){01}
\psbezier[linewidth=0.04,arrowsize=0.05291667cm 2.0,arrowlength=1.4,arrowinset=0.4]{->}(2.1596906,-0.7619609)(2.7596905,-0.56196094)(2.7596905,-0.56196094)(3.3596907,-0.7619609)
\psbezier[linewidth=0.04,arrowsize=0.05291667cm 2.0,arrowlength=1.4,arrowinset=0.4]{->}(1.7596905,-0.961961)(1.1596906,-1.161961)(1.1596906,-1.161961)(0.55969054,-0.961961)
\psbezier[linewidth=0.04,arrowsize=0.05291667cm 2.0,arrowlength=1.4,arrowinset=0.4]{->}(3.3596907,-0.961961)(2.7596905,-1.161961)(2.7596905,-1.161961)(2.1596906,-0.961961)
\usefont{T1}{ptm}{m}{n}
\rput(0.3248468,-2.456961){32}
\usefont{T1}{ptm}{m}{n}
\rput(1.9173468,-2.456961){33}
\psbezier[linewidth=0.04,arrowsize=0.05291667cm 2.0,arrowlength=1.4,arrowinset=0.4]{->}(0.55969054,-2.361961)(1.1596906,-2.1619608)(1.1596906,-2.1619608)(1.7596905,-2.361961)
\usefont{T1}{ptm}{m}{n}
\rput(3.5259407,-2.456961){30}
\psbezier[linewidth=0.04,arrowsize=0.05291667cm 2.0,arrowlength=1.4,arrowinset=0.4]{->}(2.1596906,-2.361961)(2.7596905,-2.1619608)(2.7596905,-2.1619608)(3.3596907,-2.361961)
\psbezier[linewidth=0.04,arrowsize=0.05291667cm 2.0,arrowlength=1.4,arrowinset=0.4]{->}(1.7596905,-2.561961)(1.1596906,-2.761961)(1.1596906,-2.761961)(0.55969054,-2.561961)
\psbezier[linewidth=0.04,arrowsize=0.05291667cm 2.0,arrowlength=1.4,arrowinset=0.4]{->}(3.3596907,-2.561961)(2.7596905,-2.761961)(2.7596905,-2.761961)(2.1596906,-2.561961)
\psbezier[linewidth=0.04,arrowsize=0.05291667cm 2.0,arrowlength=1.4,arrowinset=0.4]{->}(0.17254911,-2.2368014)(0.0,-1.6283387)(0.0,-1.6283387)(0.22703831,-1.0380391)
\psbezier[linewidth=0.04,arrowsize=0.05291667cm 2.0,arrowlength=1.4,arrowinset=0.4]{->}(0.42683202,-1.0471206)(0.59938115,-1.6555833)(0.59938115,-1.6555833)(0.37234282,-2.2458827)
\psbezier[linewidth=0.04,arrowsize=0.05291667cm 2.0,arrowlength=1.4,arrowinset=0.4]{->}(1.7725492,-2.2368014)(1.6,-1.6283387)(1.6,-1.6283387)(1.8270383,-1.0380391)
\psbezier[linewidth=0.04,arrowsize=0.05291667cm 2.0,arrowlength=1.4,arrowinset=0.4]{->}(2.026832,-1.0471206)(2.199381,-1.6555833)(2.199381,-1.6555833)(1.9723428,-2.2458827)
\psbezier[linewidth=0.04,arrowsize=0.05291667cm 2.0,arrowlength=1.4,arrowinset=0.4]{->}(3.372549,-2.2368014)(3.2,-1.6283387)(3.2,-1.6283387)(3.4270382,-1.0380391)
\psbezier[linewidth=0.04,arrowsize=0.05291667cm 2.0,arrowlength=1.4,arrowinset=0.4]{->}(3.626832,-1.0471206)(3.799381,-1.6555833)(3.799381,-1.6555833)(3.5723429,-2.2458827)
\psbezier[linewidth=0.04,linestyle=dotted,dotsep=0.16cm,arrowsize=0.05291667cm 2.0,arrowlength=1.4,arrowinset=0.4]{->}(0.15969056,-3.361961)(0.0,-2.8283386)(0.0,-3.2283387)(0.22703831,-2.638039)
\psbezier[linewidth=0.04,linestyle=dotted,dotsep=0.16cm,arrowsize=0.05291667cm 2.0,arrowlength=1.4,arrowinset=0.4]{->}(0.42683202,-2.6471205)(0.59938115,-3.0099561)(0.59938115,-3.0099561)(0.37234282,-3.361961)
\psbezier[linewidth=0.04,linestyle=dotted,dotsep=0.16cm,arrowsize=0.05291667cm 2.0,arrowlength=1.4,arrowinset=0.4]{->}(1.7596905,-3.361961)(1.6,-2.8283386)(1.6,-3.2283387)(1.8270383,-2.638039)
\psbezier[linewidth=0.04,linestyle=dotted,dotsep=0.16cm,arrowsize=0.05291667cm 2.0,arrowlength=1.4,arrowinset=0.4]{->}(2.026832,-2.6471205)(2.199381,-3.0099561)(2.199381,-3.0099561)(1.9723428,-3.361961)
\psbezier[linewidth=0.04,linestyle=dotted,dotsep=0.16cm,arrowsize=0.05291667cm 2.0,arrowlength=1.4,arrowinset=0.4]{->}(3.3596907,-3.361961)(3.2,-2.8283386)(3.2,-3.2283387)(3.4270382,-2.638039)
\psbezier[linewidth=0.04,linestyle=dotted,dotsep=0.16cm,arrowsize=0.05291667cm 2.0,arrowlength=1.4,arrowinset=0.4]{->}(3.626832,-2.6471205)(3.799381,-3.0099561)(3.799381,-3.0099561)(3.5723429,-3.361961)
\psbezier[linewidth=0.04,linestyle=dotted,dotsep=0.16cm,arrowsize=0.05291667cm 2.0,arrowlength=1.4,arrowinset=0.4]{->}(0.15969056,2.638039)(0.0,3.1716614)(0.0,2.7716613)(0.22703831,3.361961)
\psbezier[linewidth=0.04,linestyle=dotted,dotsep=0.16cm,arrowsize=0.05291667cm 2.0,arrowlength=1.4,arrowinset=0.4]{->}(0.42683202,3.3528795)(0.59938115,2.9900439)(0.59938115,2.9900439)(0.37234282,2.638039)
\psbezier[linewidth=0.04,linestyle=dotted,dotsep=0.16cm,arrowsize=0.05291667cm 2.0,arrowlength=1.4,arrowinset=0.4]{->}(1.7596905,2.638039)(1.6,3.1716614)(1.6,2.7716613)(1.8270383,3.361961)
\psbezier[linewidth=0.04,linestyle=dotted,dotsep=0.16cm,arrowsize=0.05291667cm 2.0,arrowlength=1.4,arrowinset=0.4]{->}(2.026832,3.3528795)(2.199381,2.9900439)(2.199381,2.9900439)(1.9723428,2.638039)
\psbezier[linewidth=0.04,linestyle=dotted,dotsep=0.16cm,arrowsize=0.05291667cm 2.0,arrowlength=1.4,arrowinset=0.4]{->}(3.3596907,2.638039)(3.2,3.1716614)(3.2,2.7716613)(3.4270382,3.361961)
\psbezier[linewidth=0.04,linestyle=dotted,dotsep=0.16cm,arrowsize=0.05291667cm 2.0,arrowlength=1.4,arrowinset=0.4]{->}(3.626832,3.3528795)(3.799381,2.9900439)(3.799381,2.9900439)(3.5723429,2.638039)
\end{pspicture} 
}
\caption{The digraph $\mathfrak{G}(2)$}
\label{G2}
\end{figure}

Let us now define another digraph as follows. $\mathfrak{G}(n)$ is the subgraph 
of $\mathfrak{G}_0(n)$ obtained by removing the vertices that do not correspond to 
a ``path without crossing'' according to Madras and Slade~\cite{Madras}. 
In other words, we remove from $\mathfrak{G}_0(n)$ vertices that
do not satisfy the $SAW_1(n)$ requirement. For instance, the digraph
 $\mathfrak{G}(2)$ associated to $\mathfrak{G}_0(2)$ is depicted in Fig.~\ref{G2},
whereas Figure~\ref{G3} contains both $\mathfrak{G}(3)$ and the removed vertices.
Its construction rules are recalled in Fig.~\ref{Go3Rules}.

The links between $\mathfrak{G}(n)$ and the SAW requirements can be summarized
as follows:
\begin{itemize}
\item The vertices of the graph $\mathfrak{G}_0(n)$ represent all the possible walks of 
length $n$ in the 2D square lattice.
\item The vertices that are preserved in $\mathfrak{G}(n)$ are the conformations of $SAW_1(n) = SAW_2(n)$.
\item Two adjacent vertices $i$ and $j$ in $\mathfrak{G}(n)$ are such that it is 
possible to change the conformation $i$ in $j$ in only one pivot move.
\item Finally, a conformation of $SAW_3(n)$ is a vertex of $\mathfrak{G}(n)$ that
is reachable from the vertex $000\hdots 0$ by following a path in $\mathfrak{G}(n)$.
\end{itemize}

For instance, the conformation $(2,2,3)$ is in $SAW_3(3)$ because we can find a walk
from $000$ to $223$ in $\mathfrak{G}(n)$. The following result is obvious,

\begin{theorem}
$SAW_3(n)$ corresponds to the connected component of $000\hdots 0$ in $\mathfrak{G}(n)$,
whereas $SAW_2(n)$ is the set of vertices of $\mathfrak{G}(n)$. Thus we have:

\begin{center}
$SAW_2(n) = SAW_3(n) \Longleftrightarrow \mathfrak{G}(n)$ is (strongly) connected.
\end{center}
\end{theorem}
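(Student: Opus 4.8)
The plan is to unwind the definitions of the two graphs and the two SAW sets, translate the phrase \emph{walk in $\mathfrak{G}(n)$} into \emph{admissible pivot-move sequence} and back, and then read off the equivalence. I would organise the argument around the two set-identifications announced in the statement, after which the final biconditional is purely set-theoretic.

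First I would establish the vertex identification, namely that the vertex set of $\mathfrak{G}(n)$ is exactly $SAW_2(n)$. By construction, $\mathfrak{G}(n)$ is obtained from $\mathfrak{G}_0(n)$ by deleting precisely those vertices that fail the path-without-crossing condition of Definition~\ref{def:SAW}; hence its vertices are exactly the conformations of $SAW_1(n)$. Since the equality $SAW_1(n)=SAW_2(n)$ has already been proven, the vertex set of $\mathfrak{G}(n)$ is $SAW_2(n)$, which is the second assertion.

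Next I would identify $SAW_3(n)$ with the connected component of $000\hdots 0$. The key observation is that an edge of $\mathfrak{G}_0(n)$ from $(e_1,\dots,e_n)$ to $(f_1,\dots,f_n)$ is, by its very definition, the same datum as a single pivot move: it amounts to choosing a position and a sign $i\in\{-1,1\}$ and adding $i$ modulo $4$ to every coordinate from that position onward to the last, which is exactly the action of one of the maps $f_k$ appearing in the dynamical system $G$. Consequently, a finite walk $c_0=000\hdots 0, c_1, \dots, c_m=c$ staying inside $\mathfrak{G}(n)$ is the same thing as a folding sequence $(F^1,\dots,F^m)$ whose successive conformations $G^j\bigl((0,\dots,0),(F^1,\dots,F^m,0,\dots)\bigr)$ equal the $c_j$, together with the extra requirement that \emph{every} $c_j$ is a vertex of $\mathfrak{G}(n)$, i.e.\ a path without crossing. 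That ``every intermediate conformation is a path without crossing'' requirement is precisely the definition of $SAW_3(n)$. Therefore $c\in SAW_3(n)$ if and only if $c$ is reachable from $000\hdots 0$ by a walk remaining in $\mathfrak{G}(n)$, that is, if and only if $c$ lies in the connected component of $000\hdots 0$.

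Finally I would address the directed-versus-undirected subtlety flagged by the parenthetical ``(strongly)'', and then combine the two identifications. Because each pivot move is invertible --- a clockwise move at a given position is undone by the anticlockwise move at the same position --- the graph $\mathfrak{G}_0(n)$, and hence $\mathfrak{G}(n)$, is symmetric: every directed edge comes with its reverse, as was already remarked just after the definition of $\mathfrak{G}_0(n)$. Thus reachability is a symmetric relation, the connected component of $000\hdots 0$ is unambiguous, and weak connectivity coincides with strong connectivity. Combining the two parts, $SAW_2(n)=SAW_3(n)$ means that the whole vertex set of $\mathfrak{G}(n)$ coincides with the connected component of $000\hdots 0$, which by definition says exactly that $\mathfrak{G}(n)$ is (strongly) connected. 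The only point requiring any care --- and the reason the result is merely ``obvious'' rather than instantaneous --- is the faithful dictionary between walks in $\mathfrak{G}(n)$ and pivot-move sequences in the second step; once that correspondence is in place the equivalence follows immediately.
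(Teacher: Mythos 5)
Your proposal is correct and follows essentially the same route as the paper, which justifies the theorem via the same dictionary (vertices of $\mathfrak{G}(n)$ are the $SAW_1(n)=SAW_2(n)$ conformations, edges are single pivot moves, hence $SAW_3(n)$ is the connected component of $000\hdots0$) stated in the bulleted list preceding the theorem, together with the earlier remark that every edge of $\mathfrak{G}_0(n)$ has its reverse. Your write-up is merely more explicit about the walk-to-folding-sequence correspondence and the strong-versus-weak connectivity point than the paper, which declares the result obvious.
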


The previous section shows that the connected component of $000\hdots 0$ in $\mathfrak{G}(158)$, $\mathfrak{G}(168)$, $\mathfrak{G}(175)$, and $\mathfrak{G}(914)$ are not
equal to $\mathfrak{G}(158)$, $\mathfrak{G}(168)$, $\mathfrak{G}(175)$, and $\mathfrak{G}(914)$ respectively.
In other words, these graphs are not connected ones. 

\begin{figure}
\centering
\scalebox{0.8} 
{
\begin{pspicture}(0,-7.1428127)(15.1325,7.1228123)
\usefont{T1}{ptm}{m}{n}
\rput(5.7246876,6.8478127){\Large 203}
\psline[linewidth=0.04cm,arrowsize=0.05291667cm 2.0,arrowlength=1.4,arrowinset=0.4]{<->}(6.1896877,6.9028125)(8.389688,6.9028125)
\usefont{T1}{ptm}{m}{n}
\rput(8.735937,6.8478127){\Large 200}
\psline[linewidth=0.04cm,arrowsize=0.05291667cm 2.0,arrowlength=1.4,arrowinset=0.4]{<->}(9.189688,6.9028125)(11.389688,6.9028125)
\usefont{T1}{ptm}{m}{n}
\rput(11.715313,6.8478127){\Large 201}
\psline[linewidth=0.04cm,arrowsize=0.05291667cm 2.0,arrowlength=1.4,arrowinset=0.4]{<->}(12.189688,6.9028125)(14.389688,6.9028125)
\usefont{T1}{ptm}{m}{n}
\rput(14.734531,6.8478127){\Large 202}
\usefont{T1}{ptm}{m}{n}
\rput(3.9345312,6.0478125){\Large 232}
\psline[linewidth=0.04cm,arrowsize=0.05291667cm 2.0,arrowlength=1.4,arrowinset=0.4]{<->}(4.3896875,6.1028123)(6.5896873,6.1028123)
\usefont{T1}{ptm}{m}{n}
\rput(6.9246874,6.0478125){\Large 233}
\psline[linewidth=0.04cm,arrowsize=0.05291667cm 2.0,arrowlength=1.4,arrowinset=0.4]{<->}(7.3896875,6.1028123)(9.589687,6.1028123)
\usefont{T1}{ptm}{m}{n}
\rput(9.935938,6.0478125){\Large 230}
\psline[linewidth=0.04cm,arrowsize=0.05291667cm 2.0,arrowlength=1.4,arrowinset=0.4]{<->}(10.389688,6.1028123)(12.589687,6.1028123)
\usefont{T1}{ptm}{m}{n}
\rput(12.915313,6.0478125){\Large 231}
\psline[linewidth=0.04cm,arrowsize=0.05291667cm 2.0,arrowlength=1.4,arrowinset=0.4]{<->}(4.3896875,6.3028126)(5.3896875,6.7028127)
\psline[linewidth=0.04cm,arrowsize=0.05291667cm 2.0,arrowlength=1.4,arrowinset=0.4]{<->}(7.3896875,6.3028126)(8.389688,6.7028127)
\psline[linewidth=0.04cm,arrowsize=0.05291667cm 2.0,arrowlength=1.4,arrowinset=0.4]{<->}(10.389688,6.3028126)(11.389688,6.7028127)
\psline[linewidth=0.04cm,arrowsize=0.05291667cm 2.0,arrowlength=1.4,arrowinset=0.4]{<->}(13.389688,6.3028126)(14.389688,6.7028127)
\usefont{T1}{ptm}{m}{n}
\rput(2.1153126,5.2478123){\Large 221}
\psline[linewidth=0.04cm,arrowsize=0.05291667cm 2.0,arrowlength=1.4,arrowinset=0.4]{<->}(2.5896876,5.3028126)(4.7896876,5.3028126)
\usefont{T1}{ptm}{m}{n}
\rput(5.134531,5.2478123){\Large 222}
\psline[linewidth=0.04cm,arrowsize=0.05291667cm 2.0,arrowlength=1.4,arrowinset=0.4]{<->}(5.5896873,5.3028126)(7.7896876,5.3028126)
\usefont{T1}{ptm}{m}{n}
\rput(8.124687,5.2478123){\Large 223}
\psline[linewidth=0.04cm,arrowsize=0.05291667cm 2.0,arrowlength=1.4,arrowinset=0.4]{<->}(8.589687,5.3028126)(10.789687,5.3028126)
\usefont{T1}{ptm}{m}{n}
\rput(11.135938,5.2478123){\Large 220}
\psline[linewidth=0.04cm,arrowsize=0.05291667cm 2.0,arrowlength=1.4,arrowinset=0.4]{<->}(2.5896876,5.5028124)(3.5896876,5.9028125)
\psline[linewidth=0.04cm,arrowsize=0.05291667cm 2.0,arrowlength=1.4,arrowinset=0.4]{<->}(5.5896873,5.5028124)(6.5896873,5.9028125)
\psline[linewidth=0.04cm,arrowsize=0.05291667cm 2.0,arrowlength=1.4,arrowinset=0.4]{<->}(8.589687,5.5028124)(9.589687,5.9028125)
\psline[linewidth=0.04cm,arrowsize=0.05291667cm 2.0,arrowlength=1.4,arrowinset=0.4]{<->}(11.589687,5.5028124)(12.589687,5.9028125)
\usefont{T1}{ptm}{m}{n}
\rput(0.3359375,4.4478126){\Large 210}
\psline[linewidth=0.04cm,arrowsize=0.05291667cm 2.0,arrowlength=1.4,arrowinset=0.4]{<->}(0.7896875,4.5028124)(2.9896874,4.5028124)
\usefont{T1}{ptm}{m}{n}
\rput(3.3153124,4.4478126){\Large 211}
\psline[linewidth=0.04cm,arrowsize=0.05291667cm 2.0,arrowlength=1.4,arrowinset=0.4]{<->}(3.7896874,4.5028124)(5.9896874,4.5028124)
\usefont{T1}{ptm}{m}{n}
\rput(6.3345313,4.4478126){\Large 212}
\psline[linewidth=0.04cm,arrowsize=0.05291667cm 2.0,arrowlength=1.4,arrowinset=0.4]{<->}(6.7896876,4.5028124)(8.989688,4.5028124)
\usefont{T1}{ptm}{m}{n}
\rput(9.324688,4.4478126){\Large 213}
\psline[linewidth=0.04cm,arrowsize=0.05291667cm 2.0,arrowlength=1.4,arrowinset=0.4]{<->}(0.7896875,4.7028127)(1.7896875,5.1028123)
\psline[linewidth=0.04cm,arrowsize=0.05291667cm 2.0,arrowlength=1.4,arrowinset=0.4]{<->}(3.7896874,4.7028127)(4.7896876,5.1028123)
\psline[linewidth=0.04cm,arrowsize=0.05291667cm 2.0,arrowlength=1.4,arrowinset=0.4]{<->}(6.7896876,4.7028127)(7.7896876,5.1028123)
\psline[linewidth=0.04cm,arrowsize=0.05291667cm 2.0,arrowlength=1.4,arrowinset=0.4]{<->}(9.789687,4.7028127)(10.789687,5.1028123)
\usefont{T1}{ptm}{m}{n}
\rput(5.7115626,3.0478125){\Large 132}
\psline[linewidth=0.04cm,arrowsize=0.05291667cm 2.0,arrowlength=1.4,arrowinset=0.4]{<->}(6.1896877,3.1028125)(8.389688,3.1028125)
\usefont{T1}{ptm}{m}{n}
\rput(8.701718,3.0478125){\Large 133}
\psline[linewidth=0.04cm,arrowsize=0.05291667cm 2.0,arrowlength=1.4,arrowinset=0.4]{<->}(9.189688,3.1028125)(11.389688,3.1028125)
\usefont{T1}{ptm}{m}{n}
\rput(11.712969,3.0478125){\Large 130}
\psline[linewidth=0.04cm,arrowsize=0.05291667cm 2.0,arrowlength=1.4,arrowinset=0.4]{<->}(12.189688,3.1028125)(14.389688,3.1028125)
\usefont{T1}{ptm}{m}{n}
\rput(14.692344,3.0478125){\Large 131}
\usefont{T1}{ptm}{m}{n}
\rput(3.8923438,2.2478125){\Large 121}
\psline[linewidth=0.04cm,arrowsize=0.05291667cm 2.0,arrowlength=1.4,arrowinset=0.4]{<->}(4.3896875,2.3028126)(6.5896873,2.3028126)
\usefont{T1}{ptm}{m}{n}
\rput(6.9115624,2.2478125){\Large 122}
\psline[linewidth=0.04cm,arrowsize=0.05291667cm 2.0,arrowlength=1.4,arrowinset=0.4]{<->}(7.3896875,2.3028126)(9.589687,2.3028126)
\usefont{T1}{ptm}{m}{n}
\rput(9.901719,2.2478125){\Large 123}
\psline[linewidth=0.04cm,arrowsize=0.05291667cm 2.0,arrowlength=1.4,arrowinset=0.4]{<->}(10.389688,2.3028126)(12.589687,2.3028126)
\usefont{T1}{ptm}{m}{n}
\rput(12.912969,2.2478125){\Large 120}
\psline[linewidth=0.04cm,arrowsize=0.05291667cm 2.0,arrowlength=1.4,arrowinset=0.4]{<->}(4.3896875,2.5028124)(5.3896875,2.9028125)
\psline[linewidth=0.04cm,arrowsize=0.05291667cm 2.0,arrowlength=1.4,arrowinset=0.4]{<->}(7.3896875,2.5028124)(8.389688,2.9028125)
\psline[linewidth=0.04cm,arrowsize=0.05291667cm 2.0,arrowlength=1.4,arrowinset=0.4]{<->}(10.389688,2.5028124)(11.389688,2.9028125)
\psline[linewidth=0.04cm,arrowsize=0.05291667cm 2.0,arrowlength=1.4,arrowinset=0.4]{<->}(13.389688,2.5028124)(14.389688,2.9028125)
\usefont{T1}{ptm}{m}{n}
\rput(2.1129687,1.4478126){\Large 110}
\psline[linewidth=0.04cm,arrowsize=0.05291667cm 2.0,arrowlength=1.4,arrowinset=0.4]{<->}(2.5896876,1.5028125)(4.7896876,1.5028125)
\usefont{T1}{ptm}{m}{n}
\rput(5.092344,1.4478126){\Large 111}
\psline[linewidth=0.04cm,arrowsize=0.05291667cm 2.0,arrowlength=1.4,arrowinset=0.4]{<->}(5.5896873,1.5028125)(7.7896876,1.5028125)
\usefont{T1}{ptm}{m}{n}
\rput(8.111563,1.4478126){\Large 112}
\psline[linewidth=0.04cm,arrowsize=0.05291667cm 2.0,arrowlength=1.4,arrowinset=0.4]{<->}(8.589687,1.5028125)(10.789687,1.5028125)
\usefont{T1}{ptm}{m}{n}
\rput(11.101719,1.4478126){\Large 113}
\psline[linewidth=0.04cm,arrowsize=0.05291667cm 2.0,arrowlength=1.4,arrowinset=0.4]{<->}(2.5896876,1.7028126)(3.5896876,2.1028125)
\psline[linewidth=0.04cm,arrowsize=0.05291667cm 2.0,arrowlength=1.4,arrowinset=0.4]{<->}(5.5896873,1.7028126)(6.5896873,2.1028125)
\psline[linewidth=0.04cm,arrowsize=0.05291667cm 2.0,arrowlength=1.4,arrowinset=0.4]{<->}(8.589687,1.7028126)(9.589687,2.1028125)
\psline[linewidth=0.04cm,arrowsize=0.05291667cm 2.0,arrowlength=1.4,arrowinset=0.4]{<->}(11.589687,1.7028126)(12.589687,2.1028125)
\usefont{T1}{ptm}{m}{n}
\rput(0.30171874,0.6478125){\Large 103}
\psline[linewidth=0.04cm,arrowsize=0.05291667cm 2.0,arrowlength=1.4,arrowinset=0.4]{<->}(0.7896875,0.7028125)(2.9896874,0.7028125)
\usefont{T1}{ptm}{m}{n}
\rput(3.3129687,0.6478125){\Large 100}
\psline[linewidth=0.04cm,arrowsize=0.05291667cm 2.0,arrowlength=1.4,arrowinset=0.4]{<->}(3.7896874,0.7028125)(5.9896874,0.7028125)
\usefont{T1}{ptm}{m}{n}
\rput(6.2923436,0.6478125){\Large 101}
\psline[linewidth=0.04cm,arrowsize=0.05291667cm 2.0,arrowlength=1.4,arrowinset=0.4]{<->}(6.7896876,0.7028125)(8.989688,0.7028125)
\usefont{T1}{ptm}{m}{n}
\rput(9.311563,0.6478125){\Large 102}
\psline[linewidth=0.04cm,arrowsize=0.05291667cm 2.0,arrowlength=1.4,arrowinset=0.4]{<->}(0.7896875,0.9028125)(1.7896875,1.3028125)
\psline[linewidth=0.04cm,arrowsize=0.05291667cm 2.0,arrowlength=1.4,arrowinset=0.4]{<->}(3.7896874,0.9028125)(4.7896876,1.3028125)
\psline[linewidth=0.04cm,arrowsize=0.05291667cm 2.0,arrowlength=1.4,arrowinset=0.4]{<->}(6.7896876,0.9028125)(7.7896876,1.3028125)
\psline[linewidth=0.04cm,arrowsize=0.05291667cm 2.0,arrowlength=1.4,arrowinset=0.4]{<->}(9.789687,0.9028125)(10.789687,1.3028125)
\psline[linewidth=0.02cm,arrowsize=0.05291667cm 2.0,arrowlength=1.4,arrowinset=0.4]{<->}(14.789687,6.5028124)(14.789687,3.3028126)
\psline[linewidth=0.02cm,arrowsize=0.05291667cm 2.0,arrowlength=1.4,arrowinset=0.4]{<->}(12.989688,5.7028127)(12.989688,2.5028124)
\psline[linewidth=0.02cm,arrowsize=0.05291667cm 2.0,arrowlength=1.4,arrowinset=0.4]{<->}(11.189688,4.9028125)(11.189688,1.7028126)
\psline[linewidth=0.02cm,arrowsize=0.05291667cm 2.0,arrowlength=1.4,arrowinset=0.4]{<->}(9.389688,4.1028123)(9.389688,0.9028125)
\psline[linewidth=0.02cm,arrowsize=0.05291667cm 2.0,arrowlength=1.4,arrowinset=0.4]{<->}(11.789687,6.5028124)(11.789687,3.3028126)
\psline[linewidth=0.02cm,arrowsize=0.05291667cm 2.0,arrowlength=1.4,arrowinset=0.4]{<->}(9.989688,5.7028127)(9.989688,2.5028124)
\psline[linewidth=0.02cm,arrowsize=0.05291667cm 2.0,arrowlength=1.4,arrowinset=0.4]{<->}(8.189688,4.9028125)(8.189688,1.7028126)
\psline[linewidth=0.02cm,arrowsize=0.05291667cm 2.0,arrowlength=1.4,arrowinset=0.4]{<->}(6.3896875,4.1028123)(6.3896875,0.9028125)
\psline[linewidth=0.02cm,arrowsize=0.05291667cm 2.0,arrowlength=1.4,arrowinset=0.4]{<->}(8.789687,6.5028124)(8.789687,3.3028126)
\psline[linewidth=0.02cm,arrowsize=0.05291667cm 2.0,arrowlength=1.4,arrowinset=0.4]{<->}(6.9896874,5.7028127)(6.9896874,2.5028124)
\psline[linewidth=0.02cm,arrowsize=0.05291667cm 2.0,arrowlength=1.4,arrowinset=0.4]{<->}(5.1896877,4.9028125)(5.1896877,1.7028126)
\psline[linewidth=0.02cm,arrowsize=0.05291667cm 2.0,arrowlength=1.4,arrowinset=0.4]{<->}(3.3896875,4.1028123)(3.3896875,0.9028125)
\psline[linewidth=0.02cm,arrowsize=0.05291667cm 2.0,arrowlength=1.4,arrowinset=0.4]{<->}(5.7896876,6.5028124)(5.7896876,3.3028126)
\psline[linewidth=0.02cm,arrowsize=0.05291667cm 2.0,arrowlength=1.4,arrowinset=0.4]{<->}(3.9896874,5.7028127)(3.9896874,2.5028124)
\psline[linewidth=0.02cm,arrowsize=0.05291667cm 2.0,arrowlength=1.4,arrowinset=0.4]{<->}(2.1896875,4.9028125)(2.1896875,1.7028126)
\psline[linewidth=0.02cm,arrowsize=0.05291667cm 2.0,arrowlength=1.4,arrowinset=0.4]{<->}(0.3896875,4.1028123)(0.3896875,0.9028125)
\usefont{T1}{ptm}{m}{n}
\rput(5.7117186,-0.7521875){\Large 021}
\psline[linewidth=0.04cm,arrowsize=0.05291667cm 2.0,arrowlength=1.4,arrowinset=0.4]{<->}(6.1896877,-0.6971875)(8.389688,-0.6971875)
\usefont{T1}{ptm}{m}{n}
\rput(8.730938,-0.7521875){\Large 022}
\psline[linewidth=0.04cm,arrowsize=0.05291667cm 2.0,arrowlength=1.4,arrowinset=0.4]{<->}(9.189688,-0.6971875)(11.389688,-0.6971875)
\usefont{T1}{ptm}{m}{n}
\rput(11.721094,-0.7521875){\Large 023}
\psline[linewidth=0.04cm,arrowsize=0.05291667cm 2.0,arrowlength=1.4,arrowinset=0.4]{<->}(12.189688,-0.6971875)(14.389688,-0.6971875)
\usefont{T1}{ptm}{m}{n}
\rput(14.732344,-0.7521875){\Large 020}
\usefont{T1}{ptm}{m}{n}
\rput(3.9323437,-1.5521874){\Large 010}
\psline[linewidth=0.04cm,arrowsize=0.05291667cm 2.0,arrowlength=1.4,arrowinset=0.4]{<->}(4.3896875,-1.4971875)(6.5896873,-1.4971875)
\usefont{T1}{ptm}{m}{n}
\rput(6.911719,-1.5521874){\Large 011}
\psline[linewidth=0.04cm,arrowsize=0.05291667cm 2.0,arrowlength=1.4,arrowinset=0.4]{<->}(7.3896875,-1.4971875)(9.589687,-1.4971875)
\usefont{T1}{ptm}{m}{n}
\rput(9.930938,-1.5521874){\Large 012}
\psline[linewidth=0.04cm,arrowsize=0.05291667cm 2.0,arrowlength=1.4,arrowinset=0.4]{<->}(10.389688,-1.4971875)(12.589687,-1.4971875)
\usefont{T1}{ptm}{m}{n}
\rput(12.921094,-1.5521874){\Large 013}
\psline[linewidth=0.04cm,arrowsize=0.05291667cm 2.0,arrowlength=1.4,arrowinset=0.4]{<->}(4.3896875,-1.2971874)(5.3896875,-0.8971875)
\psline[linewidth=0.04cm,arrowsize=0.05291667cm 2.0,arrowlength=1.4,arrowinset=0.4]{<->}(7.3896875,-1.2971874)(8.389688,-0.8971875)
\psline[linewidth=0.04cm,arrowsize=0.05291667cm 2.0,arrowlength=1.4,arrowinset=0.4]{<->}(10.389688,-1.2971874)(11.389688,-0.8971875)
\psline[linewidth=0.04cm,arrowsize=0.05291667cm 2.0,arrowlength=1.4,arrowinset=0.4]{<->}(13.389688,-1.2971874)(14.389688,-0.8971875)
\usefont{T1}{ptm}{m}{n}
\rput(2.1210938,-2.3521874){\Large 003}
\psline[linewidth=0.04cm,arrowsize=0.05291667cm 2.0,arrowlength=1.4,arrowinset=0.4]{<->}(2.5896876,-2.2971876)(4.7896876,-2.2971876)
\usefont{T1}{ptm}{m}{n}
\rput(5.132344,-2.3521874){\Large 000}
\psline[linewidth=0.04cm,arrowsize=0.05291667cm 2.0,arrowlength=1.4,arrowinset=0.4]{<->}(5.5896873,-2.2971876)(7.7896876,-2.2971876)
\usefont{T1}{ptm}{m}{n}
\rput(8.111719,-2.3521874){\Large 001}
\psline[linewidth=0.04cm,arrowsize=0.05291667cm 2.0,arrowlength=1.4,arrowinset=0.4]{<->}(8.589687,-2.2971876)(10.789687,-2.2971876)
\usefont{T1}{ptm}{m}{n}
\rput(11.130938,-2.3521874){\Large 002}
\psline[linewidth=0.04cm,arrowsize=0.05291667cm 2.0,arrowlength=1.4,arrowinset=0.4]{<->}(2.5896876,-2.0971875)(3.5896876,-1.6971875)
\psline[linewidth=0.04cm,arrowsize=0.05291667cm 2.0,arrowlength=1.4,arrowinset=0.4]{<->}(5.5896873,-2.0971875)(6.5896873,-1.6971875)
\psline[linewidth=0.04cm,arrowsize=0.05291667cm 2.0,arrowlength=1.4,arrowinset=0.4]{<->}(8.589687,-2.0971875)(9.589687,-1.6971875)
\psline[linewidth=0.04cm,arrowsize=0.05291667cm 2.0,arrowlength=1.4,arrowinset=0.4]{<->}(11.589687,-2.0971875)(12.589687,-1.6971875)
\usefont{T1}{ptm}{m}{n}
\rput(0.3309375,-3.1521876){\Large 032}
\psline[linewidth=0.04cm,arrowsize=0.05291667cm 2.0,arrowlength=1.4,arrowinset=0.4]{<->}(0.7896875,-3.0971875)(2.9896874,-3.0971875)
\usefont{T1}{ptm}{m}{n}
\rput(3.3210938,-3.1521876){\Large 033}
\psline[linewidth=0.04cm,arrowsize=0.05291667cm 2.0,arrowlength=1.4,arrowinset=0.4]{<->}(3.7896874,-3.0971875)(5.9896874,-3.0971875)
\usefont{T1}{ptm}{m}{n}
\rput(6.3323436,-3.1521876){\Large 030}
\psline[linewidth=0.04cm,arrowsize=0.05291667cm 2.0,arrowlength=1.4,arrowinset=0.4]{<->}(6.7896876,-3.0971875)(8.989688,-3.0971875)
\usefont{T1}{ptm}{m}{n}
\rput(9.311719,-3.1521876){\Large 031}
\psline[linewidth=0.04cm,arrowsize=0.05291667cm 2.0,arrowlength=1.4,arrowinset=0.4]{<->}(0.7896875,-2.8971875)(1.7896875,-2.4971876)
\psline[linewidth=0.04cm,arrowsize=0.05291667cm 2.0,arrowlength=1.4,arrowinset=0.4]{<->}(3.7896874,-2.8971875)(4.7896876,-2.4971876)
\psline[linewidth=0.04cm,arrowsize=0.05291667cm 2.0,arrowlength=1.4,arrowinset=0.4]{<->}(6.7896876,-2.8971875)(7.7896876,-2.4971876)
\psline[linewidth=0.04cm,arrowsize=0.05291667cm 2.0,arrowlength=1.4,arrowinset=0.4]{<->}(9.789687,-2.8971875)(10.789687,-2.4971876)
\psline[linewidth=0.02cm,arrowsize=0.05291667cm 2.0,arrowlength=1.4,arrowinset=0.4]{<->}(14.789687,2.7028124)(14.789687,-0.4971875)
\psline[linewidth=0.02cm,arrowsize=0.05291667cm 2.0,arrowlength=1.4,arrowinset=0.4]{<->}(12.989688,1.9028125)(12.989688,-1.2971874)
\psline[linewidth=0.02cm,arrowsize=0.05291667cm 2.0,arrowlength=1.4,arrowinset=0.4]{<->}(11.189688,1.1028125)(11.189688,-2.0971875)
\psline[linewidth=0.02cm,arrowsize=0.05291667cm 2.0,arrowlength=1.4,arrowinset=0.4]{<->}(9.389688,0.3028125)(9.389688,-2.8971875)
\psline[linewidth=0.02cm,arrowsize=0.05291667cm 2.0,arrowlength=1.4,arrowinset=0.4]{<->}(11.789687,2.7028124)(11.789687,-0.4971875)
\psline[linewidth=0.02cm,arrowsize=0.05291667cm 2.0,arrowlength=1.4,arrowinset=0.4]{<->}(9.989688,1.9028125)(9.989688,-1.2971874)
\psline[linewidth=0.02cm,arrowsize=0.05291667cm 2.0,arrowlength=1.4,arrowinset=0.4]{<->}(8.189688,1.1028125)(8.189688,-2.0971875)
\psline[linewidth=0.02cm,arrowsize=0.05291667cm 2.0,arrowlength=1.4,arrowinset=0.4]{<->}(6.3896875,0.3028125)(6.3896875,-2.8971875)
\psline[linewidth=0.02cm,arrowsize=0.05291667cm 2.0,arrowlength=1.4,arrowinset=0.4]{<->}(8.789687,2.7028124)(8.789687,-0.4971875)
\psline[linewidth=0.02cm,arrowsize=0.05291667cm 2.0,arrowlength=1.4,arrowinset=0.4]{<->}(6.9896874,1.9028125)(6.9896874,-1.2971874)
\psline[linewidth=0.02cm,arrowsize=0.05291667cm 2.0,arrowlength=1.4,arrowinset=0.4]{<->}(5.1896877,1.1028125)(5.1896877,-2.0971875)
\psline[linewidth=0.02cm,arrowsize=0.05291667cm 2.0,arrowlength=1.4,arrowinset=0.4]{<->}(3.3896875,0.3028125)(3.3896875,-2.8971875)
\psline[linewidth=0.02cm,arrowsize=0.05291667cm 2.0,arrowlength=1.4,arrowinset=0.4]{<->}(5.7896876,2.7028124)(5.7896876,-0.4971875)
\psline[linewidth=0.02cm,arrowsize=0.05291667cm 2.0,arrowlength=1.4,arrowinset=0.4]{<->}(3.9896874,1.9028125)(3.9896874,-1.2971874)
\psline[linewidth=0.02cm,arrowsize=0.05291667cm 2.0,arrowlength=1.4,arrowinset=0.4]{<->}(2.1896875,1.1028125)(2.1896875,-2.0971875)
\psline[linewidth=0.02cm,arrowsize=0.05291667cm 2.0,arrowlength=1.4,arrowinset=0.4]{<->}(0.3896875,0.3028125)(0.3896875,-2.8971875)
\usefont{T1}{ptm}{m}{n}
\rput(5.731406,-4.5521874){\Large 310}
\psline[linewidth=0.04cm,arrowsize=0.05291667cm 2.0,arrowlength=1.4,arrowinset=0.4]{<->}(6.1896877,-4.4971876)(8.389688,-4.4971876)
\usefont{T1}{ptm}{m}{n}
\rput(8.710781,-4.5521874){\Large 311}
\psline[linewidth=0.04cm,arrowsize=0.05291667cm 2.0,arrowlength=1.4,arrowinset=0.4]{<->}(9.189688,-4.4971876)(11.389688,-4.4971876)
\usefont{T1}{ptm}{m}{n}
\rput(11.73,-4.5521874){\Large 312}
\psline[linewidth=0.04cm,arrowsize=0.05291667cm 2.0,arrowlength=1.4,arrowinset=0.4]{<->}(12.189688,-4.4971876)(14.389688,-4.4971876)
\usefont{T1}{ptm}{m}{n}
\rput(14.720157,-4.5521874){\Large 313}
\usefont{T1}{ptm}{m}{n}
\rput(3.9201562,-5.3521876){\Large 303}
\psline[linewidth=0.04cm,arrowsize=0.05291667cm 2.0,arrowlength=1.4,arrowinset=0.4]{<->}(4.3896875,-5.2971873)(6.5896873,-5.2971873)
\usefont{T1}{ptm}{m}{n}
\rput(6.931406,-5.3521876){\Large 300}
\psline[linewidth=0.04cm,arrowsize=0.05291667cm 2.0,arrowlength=1.4,arrowinset=0.4]{<->}(7.3896875,-5.2971873)(9.589687,-5.2971873)
\usefont{T1}{ptm}{m}{n}
\rput(9.910781,-5.3521876){\Large 301}
\psline[linewidth=0.04cm,arrowsize=0.05291667cm 2.0,arrowlength=1.4,arrowinset=0.4]{<->}(10.389688,-5.2971873)(12.589687,-5.2971873)
\usefont{T1}{ptm}{m}{n}
\rput(12.93,-5.3521876){\Large 302}
\psline[linewidth=0.04cm,arrowsize=0.05291667cm 2.0,arrowlength=1.4,arrowinset=0.4]{<->}(4.3896875,-5.0971875)(5.3896875,-4.6971874)
\psline[linewidth=0.04cm,arrowsize=0.05291667cm 2.0,arrowlength=1.4,arrowinset=0.4]{<->}(7.3896875,-5.0971875)(8.389688,-4.6971874)
\psline[linewidth=0.04cm,arrowsize=0.05291667cm 2.0,arrowlength=1.4,arrowinset=0.4]{<->}(10.389688,-5.0971875)(11.389688,-4.6971874)
\psline[linewidth=0.04cm,arrowsize=0.05291667cm 2.0,arrowlength=1.4,arrowinset=0.4]{<->}(13.389688,-5.0971875)(14.389688,-4.6971874)
\usefont{T1}{ptm}{m}{n}
\rput(2.13,-6.1521873){\Large 332}
\psline[linewidth=0.04cm,arrowsize=0.05291667cm 2.0,arrowlength=1.4,arrowinset=0.4]{<->}(2.5896876,-6.0971875)(4.7896876,-6.0971875)
\usefont{T1}{ptm}{m}{n}
\rput(5.1201563,-6.1521873){\Large 333}
\psline[linewidth=0.04cm,arrowsize=0.05291667cm 2.0,arrowlength=1.4,arrowinset=0.4]{<->}(5.5896873,-6.0971875)(7.7896876,-6.0971875)
\usefont{T1}{ptm}{m}{n}
\rput(8.131406,-6.1521873){\Large 330}
\psline[linewidth=0.04cm,arrowsize=0.05291667cm 2.0,arrowlength=1.4,arrowinset=0.4]{<->}(8.589687,-6.0971875)(10.789687,-6.0971875)
\usefont{T1}{ptm}{m}{n}
\rput(11.110782,-6.1521873){\Large 331}
\psline[linewidth=0.04cm,arrowsize=0.05291667cm 2.0,arrowlength=1.4,arrowinset=0.4]{<->}(2.5896876,-5.8971877)(3.5896876,-5.4971876)
\psline[linewidth=0.04cm,arrowsize=0.05291667cm 2.0,arrowlength=1.4,arrowinset=0.4]{<->}(5.5896873,-5.8971877)(6.5896873,-5.4971876)
\psline[linewidth=0.04cm,arrowsize=0.05291667cm 2.0,arrowlength=1.4,arrowinset=0.4]{<->}(8.589687,-5.8971877)(9.589687,-5.4971876)
\psline[linewidth=0.04cm,arrowsize=0.05291667cm 2.0,arrowlength=1.4,arrowinset=0.4]{<->}(11.589687,-5.8971877)(12.589687,-5.4971876)
\usefont{T1}{ptm}{m}{n}
\rput(0.31078124,-6.9521875){\Large 321}
\psline[linewidth=0.04cm,arrowsize=0.05291667cm 2.0,arrowlength=1.4,arrowinset=0.4]{<->}(0.7896875,-6.8971877)(2.9896874,-6.8971877)
\usefont{T1}{ptm}{m}{n}
\rput(3.33,-6.9521875){\Large 322}
\psline[linewidth=0.04cm,arrowsize=0.05291667cm 2.0,arrowlength=1.4,arrowinset=0.4]{<->}(3.7896874,-6.8971877)(5.9896874,-6.8971877)
\usefont{T1}{ptm}{m}{n}
\rput(6.320156,-6.9521875){\Large 323}
\psline[linewidth=0.04cm,arrowsize=0.05291667cm 2.0,arrowlength=1.4,arrowinset=0.4]{<->}(6.7896876,-6.8971877)(8.989688,-6.8971877)
\usefont{T1}{ptm}{m}{n}
\rput(9.331407,-6.9521875){\Large 320}
\psline[linewidth=0.04cm,arrowsize=0.05291667cm 2.0,arrowlength=1.4,arrowinset=0.4]{<->}(0.7896875,-6.6971874)(1.7896875,-6.2971873)
\psline[linewidth=0.04cm,arrowsize=0.05291667cm 2.0,arrowlength=1.4,arrowinset=0.4]{<->}(3.7896874,-6.6971874)(4.7896876,-6.2971873)
\psline[linewidth=0.04cm,arrowsize=0.05291667cm 2.0,arrowlength=1.4,arrowinset=0.4]{<->}(6.7896876,-6.6971874)(7.7896876,-6.2971873)
\psline[linewidth=0.04cm,arrowsize=0.05291667cm 2.0,arrowlength=1.4,arrowinset=0.4]{<->}(9.789687,-6.6971874)(10.789687,-6.2971873)
\psline[linewidth=0.02cm,arrowsize=0.05291667cm 2.0,arrowlength=1.4,arrowinset=0.4]{<->}(14.789687,-1.0971875)(14.789687,-4.2971873)
\psline[linewidth=0.02cm,arrowsize=0.05291667cm 2.0,arrowlength=1.4,arrowinset=0.4]{<->}(12.989688,-1.8971875)(12.989688,-5.0971875)
\psline[linewidth=0.02cm,arrowsize=0.05291667cm 2.0,arrowlength=1.4,arrowinset=0.4]{<->}(11.189688,-2.6971874)(11.189688,-5.8971877)
\psline[linewidth=0.02cm,arrowsize=0.05291667cm 2.0,arrowlength=1.4,arrowinset=0.4]{<->}(9.389688,-3.4971876)(9.389688,-6.6971874)
\psline[linewidth=0.02cm,arrowsize=0.05291667cm 2.0,arrowlength=1.4,arrowinset=0.4]{<->}(11.789687,-1.0971875)(11.789687,-4.2971873)
\psline[linewidth=0.02cm,arrowsize=0.05291667cm 2.0,arrowlength=1.4,arrowinset=0.4]{<->}(9.989688,-1.8971875)(9.989688,-5.0971875)
\psline[linewidth=0.02cm,arrowsize=0.05291667cm 2.0,arrowlength=1.4,arrowinset=0.4]{<->}(8.189688,-2.6971874)(8.189688,-5.8971877)
\psline[linewidth=0.02cm,arrowsize=0.05291667cm 2.0,arrowlength=1.4,arrowinset=0.4]{<->}(6.3896875,-3.4971876)(6.3896875,-6.6971874)
\psline[linewidth=0.02cm,arrowsize=0.05291667cm 2.0,arrowlength=1.4,arrowinset=0.4]{<->}(8.789687,-1.0971875)(8.789687,-4.2971873)
\psline[linewidth=0.02cm,arrowsize=0.05291667cm 2.0,arrowlength=1.4,arrowinset=0.4]{<->}(6.9896874,-1.8971875)(6.9896874,-5.0971875)
\psline[linewidth=0.02cm,arrowsize=0.05291667cm 2.0,arrowlength=1.4,arrowinset=0.4]{<->}(5.1896877,-2.6971874)(5.1896877,-5.8971877)
\psline[linewidth=0.02cm,arrowsize=0.05291667cm 2.0,arrowlength=1.4,arrowinset=0.4]{<->}(3.3896875,-3.4971876)(3.3896875,-6.6971874)
\psline[linewidth=0.02cm,arrowsize=0.05291667cm 2.0,arrowlength=1.4,arrowinset=0.4]{<->}(5.7896876,-1.0971875)(5.7896876,-4.2971873)
\psline[linewidth=0.02cm,arrowsize=0.05291667cm 2.0,arrowlength=1.4,arrowinset=0.4]{<->}(3.9896874,-1.8971875)(3.9896874,-5.0971875)
\psline[linewidth=0.02cm,arrowsize=0.05291667cm 2.0,arrowlength=1.4,arrowinset=0.4]{<->}(2.1896875,-2.6971874)(2.1896875,-5.8971877)
\psline[linewidth=0.02cm,arrowsize=0.05291667cm 2.0,arrowlength=1.4,arrowinset=0.4]{<->}(0.3896875,-3.4971876)(0.3896875,-6.6971874)
\psellipse[linewidth=0.04,dimen=outer](5.1796875,-2.3571875)(0.57,0.38)
\psline[linewidth=0.04cm](5.3896875,-4.2971873)(5.9896874,-4.6971874)
\psline[linewidth=0.04cm](5.3896875,-4.6971874)(5.9896874,-4.2971873)
\psline[linewidth=0.04cm](12.589687,-5.0971875)(13.189688,-5.4971876)
\psline[linewidth=0.04cm](12.589687,-5.4971876)(13.189688,-5.0971875)
\psline[linewidth=0.04cm](8.989688,-6.6971874)(9.589687,-7.0971875)
\psline[linewidth=0.04cm](8.989688,-7.0971875)(9.589687,-6.6971874)
\psline[linewidth=0.04cm](10.789687,-5.8971877)(11.389688,-6.2971873)
\psline[linewidth=0.04cm](10.789687,-6.2971873)(11.389688,-5.8971877)
\psline[linewidth=0.04cm](14.389688,-4.2971873)(14.989688,-4.6971874)
\psline[linewidth=0.04cm](14.389688,-4.6971874)(14.989688,-4.2971873)
\psline[linewidth=0.04cm](8.389688,-4.2971873)(8.989688,-4.6971874)
\psline[linewidth=0.04cm](8.389688,-4.6971874)(8.989688,-4.2971873)
\psline[linewidth=0.04cm](11.389688,-4.2971873)(11.989688,-4.6971874)
\psline[linewidth=0.04cm](11.389688,-4.6971874)(11.989688,-4.2971873)
\psline[linewidth=0.04cm](8.989688,-2.8971875)(9.589687,-3.2971876)
\psline[linewidth=0.04cm](8.989688,-3.2971876)(9.589687,-2.8971875)
\psline[linewidth=0.04cm](10.789687,-2.0971875)(11.389688,-2.4971876)
\psline[linewidth=0.04cm](10.789687,-2.4971876)(11.389688,-2.0971875)
\psline[linewidth=0.04cm](12.589687,-1.2971874)(13.189688,-1.6971875)
\psline[linewidth=0.04cm](12.589687,-1.6971875)(13.189688,-1.2971874)
\psline[linewidth=0.04cm](14.389688,-0.4971875)(14.989688,-0.8971875)
\psline[linewidth=0.04cm](14.389688,-0.8971875)(14.989688,-0.4971875)
\psline[linewidth=0.04cm](5.3896875,-0.4971875)(5.9896874,-0.8971875)
\psline[linewidth=0.04cm](5.3896875,-0.8971875)(5.9896874,-0.4971875)
\psline[linewidth=0.04cm](8.389688,-0.4971875)(8.989688,-0.8971875)
\psline[linewidth=0.04cm](8.389688,-0.8971875)(8.989688,-0.4971875)
\psline[linewidth=0.04cm](11.389688,-0.4971875)(11.989688,-0.8971875)
\psline[linewidth=0.04cm](11.389688,-0.8971875)(11.989688,-0.4971875)
\psline[linewidth=0.04cm](5.3896875,3.3028126)(5.9896874,2.9028125)
\psline[linewidth=0.04cm](5.3896875,2.9028125)(5.9896874,3.3028126)
\psline[linewidth=0.04cm](8.389688,3.3028126)(8.989688,2.9028125)
\psline[linewidth=0.04cm](8.389688,2.9028125)(8.989688,3.3028126)
\psline[linewidth=0.04cm](11.389688,3.3028126)(11.989688,2.9028125)
\psline[linewidth=0.04cm](11.389688,2.9028125)(11.989688,3.3028126)
\psline[linewidth=0.04cm](14.389688,3.3028126)(14.989688,2.9028125)
\psline[linewidth=0.04cm](14.389688,2.9028125)(14.989688,3.3028126)
\psline[linewidth=0.04cm](12.589687,2.5028124)(13.189688,2.1028125)
\psline[linewidth=0.04cm](12.589687,2.1028125)(13.189688,2.5028124)
\psline[linewidth=0.04cm](10.789687,1.7028126)(11.389688,1.3028125)
\psline[linewidth=0.04cm](10.789687,1.3028125)(11.389688,1.7028126)
\psline[linewidth=0.04cm](8.989688,0.9028125)(9.589687,0.5028125)
\psline[linewidth=0.04cm](8.989688,0.5028125)(9.589687,0.9028125)
\psline[linewidth=0.04cm](8.989688,4.7028127)(9.589687,4.3028126)
\psline[linewidth=0.04cm](8.989688,4.3028126)(9.589687,4.7028127)
\psline[linewidth=0.04cm](5.3896875,7.1028123)(5.9896874,6.7028127)
\psline[linewidth=0.04cm](5.3896875,6.7028127)(5.9896874,7.1028123)
\psline[linewidth=0.04cm](8.389688,7.1028123)(8.989688,6.7028127)
\psline[linewidth=0.04cm](8.389688,6.7028127)(8.989688,7.1028123)
\psline[linewidth=0.04cm](11.389688,7.1028123)(11.989688,6.7028127)
\psline[linewidth=0.04cm](11.389688,6.7028127)(11.989688,7.1028123)
\psline[linewidth=0.04cm](14.389688,7.1028123)(14.989688,6.7028127)
\psline[linewidth=0.04cm](14.389688,6.7028127)(14.989688,7.1028123)
\psline[linewidth=0.04cm](12.589687,6.3028126)(13.189688,5.9028125)
\psline[linewidth=0.04cm](12.589687,5.9028125)(13.189688,6.3028126)
\psline[linewidth=0.04cm](10.789687,5.5028124)(11.389688,5.1028123)
\psline[linewidth=0.04cm](10.789687,5.1028123)(11.389688,5.5028124)
\end{pspicture} 
}
\caption{The digraph $\mathfrak{G}(3)$}
\label{G3}
\end{figure}
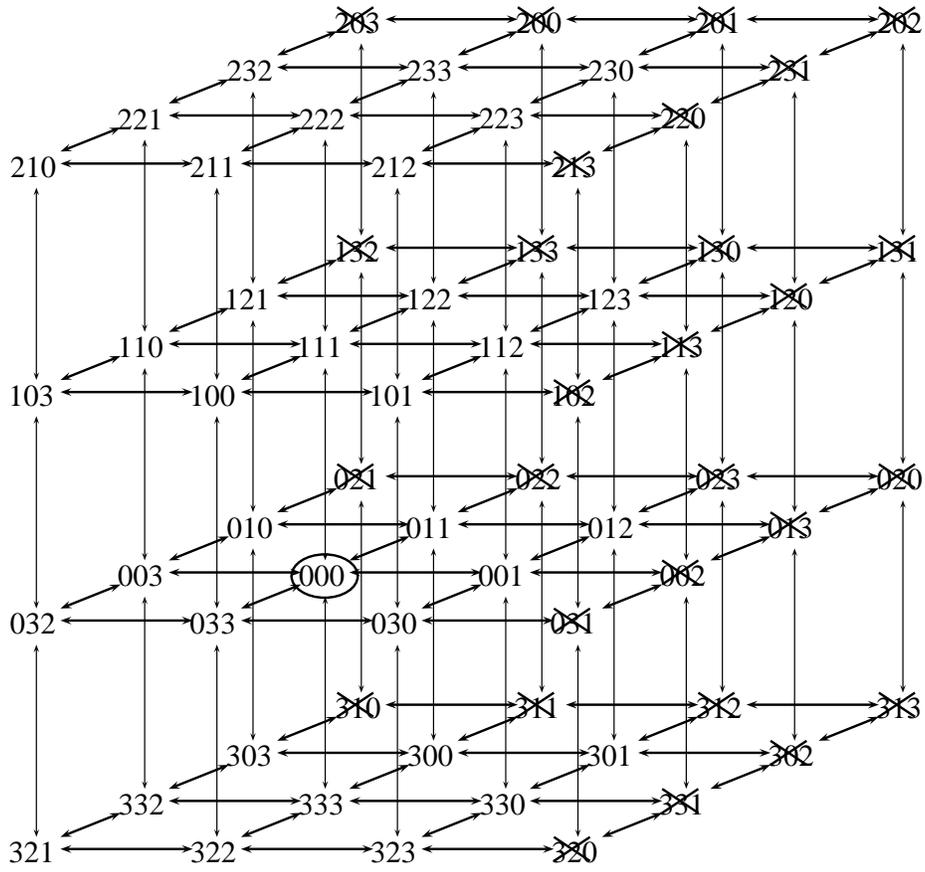

Indeed, being 
able to make one pivot move in a given conformation of size $n$ is equivalent to make
a move from one edge to another adjacent one in the graph $\mathfrak{G}(n)$.
The set of all conformations that are attainable from a given
conformation $c$ by a succession of folding processes are thus exactly
the connected component of $c$. This is why the elements of $SAW_3$ 
are exactly the connected component of the origin $000\hdots 00$.

Furthermore, the program described in Section~\ref{Saw2pasSaw3} is only
able to find connected components reduced to one vertex. Obviously, 
it should be possible to find larger connected components that have
not the origin in their set of connected vertices. 
These vertices are the conformations that are in 
$SAW_2 \setminus SAW_3$. In other words, if $\dot{c}$ is the
connected component of $c$,
\begin{equation}
SAW_2(n) \setminus SAW_3(n) = \left\{c \in \mathfrak{S}(n) \textrm{ s.t. } 000\hdots 00 \notin \dot{c} \right\}
\end{equation}
Such components 
are composed by conformations that can be folded several times, but 
that are not able to be transformed into the line
$0000\hdots 00$. These programs presented previously are thus
only able to determine conformations in the set
\begin{equation}
\left\{c \in \mathfrak{S}(n) \textrm{ s.t. } 000\hdots 00 \notin \dot{c} \textrm{ and cardinality of } \dot{c} \textrm{ is } 1 \right\}
\end{equation}
which is certainly strictly included into $SAW_2(n) \setminus SAW_3(n)$.
The authors' intention is to improve these programs in a future
work, in order to determine if the connected component of a
given vertex contains the origin or not. The problem making 
it difficult to obtain such components is the construction 
of $\mathfrak{S}(n)$. Until now, we:
\begin{itemize}
 \item list the $4^n$ possible walks;
 \item define nodes of the graph from this list, by testing if 
the walk is a path without crossing;
 \item for each node of the graph, we obtain the list of
its $2\times n$ possible neighbors;
 \item an edge between the considered vertex and one of its 
possible neighbors is added if and only if this neighbor is
a path without crossing.
\end{itemize}
Then we compare the size of the connected component of the 
origin to the number of vertices into the graph (this latter is indeed the number of $n$-step self-avoiding walks 
on square lattice as defined in Madras and Slane, that corresponds to the Sloane's A001411 highly non-trival sequence;
it is known that there are $\alpha^n$ self avoiding walks, with upper and lower bounds on the value $\alpha$). If the 
difference is large, then the proof of completeness is
irrelevant. Obviously, our computational 
approach can only provide results for small $n$ 
corresponding to peptides, not proteins.
These results are listed into Table~\ref{composante connexe} 
and the ratio is represented in Figure~\ref{PlotDeLaccroissement}.

\begin{table}
 \centering
 \begin{tabular}{c|c|c|c}
 $n$ & Size of the connected comp. of $00\dots 0$ & Nodes in $\mathfrak{S}(n)$ & Nodes in $\mathfrak{S}_0(n)$\\
\hline
1 & 4 & 4 & 4 \\
2 & 12 & 12 & 16 \\
3 & 36 & 36 & 64 \\
4 & 100 & 100 & 256 \\
5 & 284 & 284 & 1024 \\
6 & 780 & 780 & 4096 \\
7 & 2172 & 2172 & 16384 \\
8 & 5916 & 5916 & 65536 \\
9 & 16268 & 16268 & 262144 \\
10 & 44100 & 44100 & 10485576 \\
11 & 120292 & 120292 & 4194304
 \end{tabular}
\caption{Sizes ratio between $SAW_2(n)$ and $SAW_3(n)$ for small $n$}
\label{composante connexe}
\end{table}

\begin{figure}
 \centering
\includegraphics[scale=0.55]{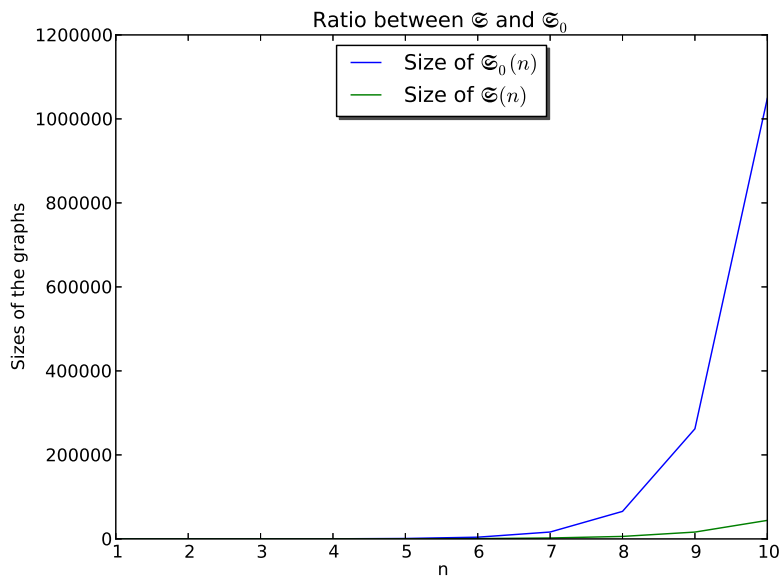}
\caption{Number of nodes removed in $\mathfrak{S}_0(n)$}
\label{PlotDeLaccroissement}
\end{figure}

One can deduce from these results that for small $n$, there is only one 
connected component in $\mathfrak{S}(n)$, and thus $SAW_2(n)=SAW_3(n)$
for $n\leqslant 11$. On the contrary, the previous section shows that 
$SAW_2(n) \neq SAW_3(n)$ for $n$ equal to $158, 169, 175,$ and $914$.
It seems as if a stall appears between $n=11$ and $n=158$ making a rupture
in the connectivity of $\mathfrak{S}(n)$: too much vertices from 
$\mathfrak{S}_0(n)$ have been removed to preserve its connectivity when
defining $\mathfrak{S}(n)$. As the difference between the sizes of
$\mathfrak{S}_0(n)$ and $\mathfrak{S}(n)$ increases more and more, we can
reasonably suppose that the remaining nodes are more and more isolated, 
leading to the appearance of several connected components, and to the
reduction of the size of the component of the origin.

\section{Conclusion}

In this paper, 
the 2D HP square lattice model used for low resolution 
prediction has been investigated. 
We have shown that its SAW requirement can be understood in at least four 
different ways. 
Then we have demonstrated that
these four sets are not equal. In particular,
$SAW_4$ is strictly 
included into $SAW_3$, which is strictly included
into $SAW_1$. 
So the NP-completeness
proof has been realized in a larger set that the
one used for prediction, which is larger than the
set of biologically possible conformations.
Concrete examples have been given, and characterizations of these
sets have finally been proposed.

At this point, we can claim that the NP-completeness of the protein folding
prediction problem does not hold due to the fact that it has been 
established for a set that is not natural in the biological world: it 
encompasses too much conformations as it takes place into $SAW_2$.
However, this discussion still remains qualitative, and if the size of
$SAW_3$ is very large, then the PSP problem is probably an NP-complete
one (even if the proof still remains to do).

We will try to compare in a future work the size of $SAW_2$, which is the Sloane's A001411 sequence, 
to the size of the connected component of the origin.
The third dimension will be investigated, and mathematical results of
the self-avoiding walks belonging into $SAW_3$ will be regarded. 
Conclusion of these studies will then be 
dressed, and solutions to improve the quality of the protein
structure prediction will finally be investigated.

\bibliographystyle{plain}
\bibliography{mabase}

\newpage
\section*{Appendix}

This appendix contains the Python programs that have helped the authors during
their investigations of the respective $SAW_i$.

\subsection{The list of all possible conformations}

Python function called \emph{conformations} (Listing~\ref{AllConformations}) 
produces the list of all possible conformations (satisfying or not 
the excluded volume requirement) as follows: the conformations
of length $n$ are the conformations of size $n-1$ with 0, 1, 2, or
3 added to their tails (recursive call). The return is a list of
conformations, that is, a list of integers lists.

\lstset{caption={Obtaining all the conformations},label=AllConformations}
\begin{lstlisting}
def conformations(n):
    if n==1:
        return [[0]]
    else:
    L = []
    for k in conformations(n-1):
        for i in range(4):
            L.append(k+[i])
    return L
\end{lstlisting}

\subsection{Obtaining the $SAW_1$ conformations}

To obtain the conformations belonging into $SAW_1$, we 
first introduce the function \emph{points} which aim is
to produce the list of points (two coordinates) of the
square lattice that corresponds to a given conformation $C$.
This is simply the function $p$ of Section~\ref{pathWithout}.

Function \emph{is\_saw1} returns a Boolean: it is true if and
only if the conformation $C$ is in $SAW_1$. To do so, the
list of its points in the lattice (its support) is produced, and it is
regarded whether this list contains twice a same point (in
other words, if the support has the same size than 
the list of points).

Finally, \emph{saw1\_conformations} produces a generator. It
returns the next $SAW_1$ conformation at each call of the 
\emph{next} method on the generator. To do so, an exhaustive
iteration of the list produced by \emph{conformations} is
realized, and the \emph{is\_saw1} function is applied to each 
element of this list, to test if this element is in $SAW_1$.

\lstset{caption={Finding the $SAW_1$ elements},label=SAW1Conformations}
\begin{lstlisting}
def points(C):
    L = [(0,0)]
    for c in C:
        P = L[-1]
        if c == 0: L.append((P[0]+1,P[1]))
        elif c == 1: L.append((P[0],P[1]-1))
        elif c == 2: L.append((P[0]-1,P[1]))
        elif c == 3: L.append((P[0],P[1]+1))
     return L


def is_saw1(C):
    L = points(C)
    return len(L) == len(list(set(L)))


def saw1_conformations(n):
    for k in conformations(n):
        if is_saw1(k):
            yield k
\end{lstlisting}

\subsection{Investigating the $SAW_3$ set}

To determine if a conformation in $SAW_2$ is in $SAW_3$ too, we try all the possible 
pivot moves (either in the clockwise direction, or in the anticlockwise). The 
\emph{fold} function tests, considering a conformation called \emph{protein}, a pivot
move on residue number \emph{position} following the given \emph{direction} (+1 or
-1, if clockwise or not). Function \emph{is\_in\_SAW3} applies the \emph{fold} function
to each residue of the candidate, and for the two possible directions. 
The function returns True if and only if no pivot move is possible (the function can
return erroneous False responses for conformations that can be unfolded a few, but
never until the line 0000...00).

\lstset{caption={Testing whether a $SAW_2$ element is in $SAW_3$},label=SAW3Conformations}

\begin{lstlisting}
def fold(protein, position, direction):
    if position == 0:
        return protein
    new_conformation = []
    for k in range(len(protein)):
        if k<abs(position):
            new_conformation.append(protein[k])
        else:
            new_conformation.append((protein[k]+direction)%4)
    return new_conformation


def is_in_SAW3(candidate):
    for k in range(1,len(candidate)):
        if is_saw1(fold(candidate, k, -1)):
            return True
        elif is_saw1(fold(candidate, k, +1)):
            return True
    return False
\end{lstlisting}

\end{document}